%% file: paper.tex
\documentclass[runningheads]{llncs}
\usepackage[T1]{fontenc}
\usepackage{graphicx}
\usepackage{hyperref}
\usepackage{color}

\usepackage{amsmath}
\usepackage{amssymb}
\usepackage{bm}
\usepackage{tikz}
\usepackage{wrapfig}
\usepackage{environ}

\usepackage{changes}
\setauthormarkup{}
\definechangesauthor[name=me, color=blue]{TR}
\definechangesauthor[name=me, color=magenta]{JO}
\definechangesauthor[name=fab, color=teal]{FF}

\input{./macros}

\begin{document}
\title{How to peel fully convex digital sets}
\author{Fabien Feschet\inst{1}\orcidID{0000-0001-5178-0842} \and Jacques-Olivier Lachaud\inst{2}\orcidID{0000-0003-4236-2133}}
\authorrunning{F. Feschet and J.-O. Lachaud}
\institute{
    Universit{\'e} Clermont Auvergne, CNRS, ENSMSE, LIMOS, F-63000 Clermont-Ferrand, France\\
  \email{fabien.feschet@u-auvergne.fr}\\
  \and
  	Universit\'e Savoie Mont Blanc, CNRS, LAMA, F-73000 Chambéry, France\\
  \email{jacques-olivier.lachaud@univ-smb.fr}\\
}
\maketitle              %
\begin{abstract}
  Full convexity is an interesting alternative to classical digital
  convexity since it guarantees connectedness and even simple
  connectedness in digital spaces $\mathbb{Z}^d$, for any dimension
  $d$. This paper aims at giving a better understanding of the monotonicity
  properties of fully convex digital sets, since earlier works
  showed that the question was difficult for thin fully convex sets. To
  decipher the hierarchy of fully convex sets ordered by inclusion,
  we study how we can peel a fully convex set progressively while
  keeping its full convexity. We provide a characterization of
  peelable points and fast algorithms to identify them.  Furthermore
  we show that fully convex set can be peeled one point at a time till
  reduced to the empty set, similarly to digitally convex
  sets in the classical sense.  The peeling of a fully convex set can
  be seen as an analog to homotopic thinning processes, but with an
  additional geometric property.  \keywords{Digital convexity \and
    Homotopic thinning \and Digital geometry}
\end{abstract}

\section{Introduction}
\label{sec:intro}
\input{intro}

\section{Basic notions}
\label{sec:background}
\input{background}

\section{Sufficient conditions for sequential and parallel peelability}
\label{sec:thinning}
\input{thinning}

\section{Peelability}

\label{sec:peel}
\input{peelability}

\section{Conclusion and perspectives}
\label{sec:conclu}
\input{conclu}

\begin{credits}
  \subsubsection{\ackname}
  This work is partially supported by the French National Research
  Agency within the StableProxies project (ANR-22-CE46-0006). The
  authors also thank Gilles Bertrand for initiating discussions about
  thinning full convex sets.
\end{credits}

\bibliographystyle{splncs04}
\bibliography{biblio}

\appendix
\section{Peelable configurations in 2D}
\label{sec:appendix}
\input{appendix}

\end{document}

%% file: macros.tex
\newcommand{\R}{\ensuremath{\mathbb{R}}}
\newcommand{\Z}{\ensuremath{\mathbb{Z}}}
\let\C\relax \DeclareMathOperator{\C}{\ensuremath{\mathcal{C}^\text{$d$}}} %
\newcommand{\Ck}[1]{\ensuremath{\mathcal{C}^d_{#1}}}

\newcommand{\InterCk}[2]{\ensuremath{\bar{\mathcal{C}}^d_{#1} \lbrack #2 \rbrack}}

\newcommand{\InterC}[1]{\ensuremath{\bar{\mathcal{C}}^d \lbrack #1 \rbrack}}

\newcommand{\ve}[0]{\ensuremath{\mathbf{e}}}

\newcommand{\Conv}[1]{\mathrm{CvxH}\left( #1 \right)}
\newcommand{\Vertices}[1]{\ensuremath{\mathrm{Vtcs}\left(#1\right)}}

\newcommand{\Easy}[2]{\ensuremath{\mathrm{Easy}_{#1} \lbrack #2 \rbrack}}
\newcommand{\Hard}[2]{\ensuremath{\mathrm{Hard}_{#1} \lbrack #2 \rbrack}}

\newcommand{\RefFigure}[1]{Figure~\ref{#1}}
\newcommand{\RefLemma}[1]{Lemma~\ref{#1}}

\newcommand{\RefTheorem}[1]{Theorem~\ref{#1}}

\newcommand{\RefSection}[1]{Section~\ref{#1}}

\newcommand{\Star}[1]{\ensuremath{\mathrm{Star}\left(#1\right)}}

\newcommand{\Cl}[1]{\ensuremath{\mathrm{Cl}\left(#1\right)}}

\newcommand{\TCl}[1]{\ensuremath{\bar{#1}}}

\newcommand{\Le}{\ensuremath{\leqslant}}
\newcommand{\Ge}{\ensuremath{\geqslant}}

%% file: intro.tex
Convexity is a fundamental tool to analyse the
geometry of shapes in Euclidean spaces, and it is the basis of many other
fields like convex optimization. Our aim is to get a better
understanding of the properties of full convexity. Full convexity is a
recent alternative definition to digital convexity in the digital
lattice $\Z^d$ \cite{lachaud:2021-dgmm,lachaud:2022-jmiv}, which has
the considerable advantage of guaranteeing in arbitrary dimension the
digital connectedness of its elements and even the simple
connectedness of their geometric realization. This is in stark
contrast with classical digital convexity and to most other
adaptations of digital convexity (see \cite{kiselman:2022}
for a recent overview). Indeed, other
geometric definitions of digital convexity add the digital
connectedness as an external property
\cite{kim:1982-pr,kim:1982-tpami,kim:1982-tpami-3d,ronse:1989-tpami,Eckhardt2000},
but this trick does not give any insight at what is a true digital
analog to convexity and is not helpful in dimension greater or equal
to three. Besides, axiomatic definitions of digital convexity
\cite{webster2001cell,Llinares2002} or more functional versions
\cite{Murota2001,RoyStell2003,Kiselman2004} lose the Euclidean
geometric intuition of convexity and its relation to digital
planarity.

Full convexity is already showing a strong potential in digital
geometry analysis
\cite{lachaud:2021-dgmm,lachaud:2022-jmiv,feschet:2022-dgmm,feschet:2023-jmiv}:
local characterizations of convex and concave parts, unambiguous
notion of visibility, exact geodesics, tangent planes, computation
time comparable to classical convexity. Fully convex shapes also
includes standard digital primitives like digital planes and
balls. One can design a fully convex envelope operator idempotent on
fully convex set, which gives a proper definition of polyhedral
model. Full convexity also enjoys two new nice characterizations
\cite{feschet:2024-dgmm,feschet:2025-jmiv}.
However these works have highlighted a difficulty that is proper to
the fully convex envelope operator, and which is certainly related to
the particular connectedness properties of fully convex sets: the
envelope operator may not be increasing, and this may happen around
``thin'' digital sets.

In order to understand better this peculiar hierarchical structure of
fully convex sets and regular digital sets in-between, we propose in
this paper to study the \emph{peeling} of a fully convex set. More
precisely, we determine when the full convexity property is preserved
after removing one point, or even removing several points in
parallel. This can be seen as a generalization of the works of
Tarsissi \emph{et al.}
\cite{tarsissi2019convexity,tarsissi2022algorithms,tarsissi2023convexity},
which addresses the case of convex polyominoes (hence 4-connected
digital convex objects of $\Z^2$) and which focuses on the word
combinatorics formulation of these objects
\cite{brlek2009lyndon+}. There is also a strong relation with the
fruitful works on simple points and homotopic thinning
\cite{bertrand1994new,couprie2008new}, which has led to many
applications in 3D imaging \cite{pudney1998distance}, see also
\cite{bertrand2014powerful,couprie2016asymmetric} for parallel
homotopic thinning. In a sense, peeling is a more restrictive
homotopic thinning, since it preserves homotopy and convexity, and
that kind of property could be of interest in many skeletonization or
deformation algorithms.

The paper is organized as follows. \RefSection{sec:background}
presents the necessary background information, notations and
results. \RefSection{sec:thinning} provides sufficient conditions for
peelability, valid in a sequential or in a parallel
process. \RefSection{sec:peel} focuses on the characterization of
peelability. It is shown to be locally decidable in $\Z^2$, but no more
starting from $\Z^3$. We also show that any fully convex set is peelable
one point at a time till reduced to the empty set, confirming the
hierarchical (yet not arborescent) nature of the space of fully convex
sets. \RefSection{sec:conclu} summarizes the results and outlines some
perspectives to this work.

%% file: background.tex
We introduce here basic definitions and properties needed in the rest
of the paper (they can be found
in \cite{lachaud:2022-jmiv,feschet:2023-jmiv,feschet:2025-jmiv}).  In the sequel, $\C$
is the cubical cell complex induced by $\Z^d$. Its 0-dimensional cells
are identified to points of $\Z^d$. The set $\Ck{k}$ is the set of
open $k$-dimensional cells of $\C$.

The {\em (topological) boundary} $\partial Y$ of a subset $Y$ of
$\R^d$ is the set of points in its closure but not in its
interior. The star of a cell $\sigma$ in $\C$, denoted by
$\Star{\sigma}$, is the set of cells of $\C$ whose boundary contains
$\sigma$, plus the cell $\sigma$ itself.  The closure $\Cl{\sigma}$ of
$\sigma$ contains $\sigma$ and all the cells in its boundary. In this
paper, the cell boundary operator, also denoted by $\partial$, maps a
$k$-cell to all its proper faces, that is all its $k'$-cells, $0 \Le
k' < k$, and not only its $(k - 1)$-cells.

A subcomplex $K$ of $\C$ with $\Star{K} = K$ is \emph{open}, while
being \emph{closed} when $\Cl{K} = K$.  The \emph{body} of a subcomplex $K$,
i.e. the union of its cells in $\R^d$, is written $\|K\|$.

For any real subset $Y$ of $\R^d$, we denote by $\InterCk{k}{Y}$ the
set of $k$-cells whose topological closure intersects $Y$, i.e.
$\InterCk{k}{Y} = \lbrace c \in\Ck{k}, \TCl{c}\cap Y
\neq\emptyset\rbrace$, where $\TCl{c} = \|\Cl{c}\|$ for any cell
$c$. For any subset $Y \subset \R^d$, it is natural to define
$\Star{Y} := \InterC{Y} = \cup_{0\Le k \Le d} \InterCk{k}{Y}$. Last, the set $\Conv{Y}$ is the \emph{convex hull}
of $Y$ in $\R^d$.

\begin{definition}[Full convexity]
  A subset $X \subset \Z^d$ is {\em digitally $k$-convex} for $0 \Le k
  \Le d$ whenever
  \begin{align}\label{digital-k-convexity}
    \InterCk{k}{X} = \InterCk{k}{\Conv{X}}.
  \end{align}
  Subset $X$ is {\em fully (digitally) convex} if it is digitally
  $k$-convex for all $k, 0 \Le k \Le d$.
\end{definition}

The digital $0$-convexity is the classical digital convexity. The
following characterization will be useful:
\begin{lemma}[\protect{\cite[Lemma~4]{feschet:2022-dgmm}}]
	\label{lem-full-convexity-equ}
	A digital set $X$ is fully convex iff $\Star{X} = \Star{\Conv{X}}$.
\end{lemma}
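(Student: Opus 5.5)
We need to prove: $X$ is fully convex iff $\Star{X} = \Star{\Conv{X}}$, where $\Star{Y} = \InterC{Y} = \cup_k \InterCk{k}{Y}$.

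The plan is to show that the equality of stars is just the conjunction over $k$ of the digital $k$-convexity conditions, by sorting cells by dimension. The key observation is that the sets $\Ck{k}$, $0 \Le k \Le d$, are pairwise disjoint, since every cell of $\C$ has exactly one dimension. Hence for any $Y \subset \R^d$,
\[
\InterC{Y} \cap \Ck{k} = \InterCk{k}{Y},
\]
so $\InterC{Y}$ is the disjoint union of its graded pieces $\InterCk{k}{Y}$.

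For the direct implication, assume $X$ is fully convex. Then by definition $\InterCk{k}{X} = \InterCk{k}{\Conv{X}}$ for every $k$ with $0 \Le k \Le d$. Taking the union over $k$ gives $\Star{X} = \InterC{X} = \InterC{\Conv{X}} = \Star{\Conv{X}}$.

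For the converse, assume $\Star{X} = \Star{\Conv{X}}$. Intersecting both sides with $\Ck{k}$ and applying the identity above yields $\InterCk{k}{X} = \InterCk{k}{\Conv{X}}$ for each $k$. This is exactly digital $k$-convexity for all $k$, hence full convexity.

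There is no real obstacle here. The only point needing care is that $\Star{\cdot}$ applied to the digital set $X$ is meant as $\InterC{X}$, with $X$ viewed as a subset of $\R^d$ through the identification of $0$-cells with points. It is not the star of the subcomplex formed by the $0$-cells of $X$. These two readings nevertheless coincide. A cell $c$ has $\TCl{c} \cap X \neq \emptyset$ exactly when some point $x \in X$ lies in $\TCl{c}$. A lattice point lying in $\TCl{c}$ must be a vertex of $c$, that is, a $0$-face of $\Cl{c}$, and this holds exactly when $c \in \Star{x}$. With this checked, the argument above is purely set-theoretic.
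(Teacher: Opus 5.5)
Your proof is correct and takes the natural route. Every cell of $\C$ has exactly one dimension, so $\Star{Y}=\InterC{Y}$ is the disjoint union of its graded pieces $\InterCk{k}{Y}$, and equality of stars is therefore equivalent to the conjunction over $k$ of the digital $k$-convexity equalities. The paper gives no proof of its own and only cites \cite[Lemma~4]{feschet:2022-dgmm}; your side check that $\InterC{X}$ coincides with the union of the cell stars $\Star{x}$, $x\in X$, is also correct (the lattice points of a closed unit cube are exactly its vertices), though it is not needed under the paper's convention $\Star{Y} := \InterC{Y}$.
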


Finite convex sets $Y$ in $\R^d$ are intended as $Y =
\Conv{\Vertices{Y}}$ where $\Vertices{Y}$ is the set of extreme points
of $Y$, also called \emph{vertices} in this paper.
By definition, $\Vertices{Y}\subset Y$. Moreover, any one of the
vertices of $Y$ cannot be written as a convex linear combination of
other points of $Y$. Since we are considering finite digital sets, the
convex set $\Conv{Z}$ is always a bounded polytope of $\R^d$ for all
finite subsets $Z$ of $\Z^d$. A vertex $z$ of a convex set $\Conv{Y}$
is such that $z$ does not belong to any open segment $ \rbrack
u,w \lbrack$ with $u,w \in \Conv{Y}$. We can specialize this property
for digital sets. Let us denote by $z[\ve]$ the line with direction
$\ve$ passing through $z$. It is the union of two rays $z^{+}[\ve]$
and $z^{-}[\ve]$ which are both semi-infinite in the direction of
$+\ve$ and resp. $-\ve$, and which contains $z$. We then have the
following property.

\begin{lemma}
  For $X\subset\Z^d$, $z\in\Vertices{\Conv{X}}$ and $\varepsilon\in\{+,-\}$,
  $z^{\varepsilon}[\ve]\cap X\neq\emptyset$ $\Rightarrow$ $z^{-\varepsilon}[\ve]\cap X=\emptyset$.
\end{lemma}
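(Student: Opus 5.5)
The plan is a direct argument by contradiction from the characterization of vertices recalled just above: a vertex $z$ of $\Conv{X}$ lies in no open segment $\rbrack u,w \lbrack$ with $u,w \in \Conv{X}$.

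First, a point about how to read the statement. Both rays $z^{+}[\ve]$ and $z^{-}[\ve]$ contain $z$. Since $X$ is finite, $\Conv{X}$ is a polytope whose vertices belong to $X$, so $z \in X$. Both intersections are then trivially nonempty. The statement only has content, and I will prove it, with the rays taken without their common origin, i.e. $z^{\pm}[\ve]\setminus\{z\}$ (equivalently, points of $X$ on the rays distinct from $z$). We also assume $\ve \neq 0$, so that the line $z[\ve]$ is well defined.

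Suppose, for contradiction, that there exist $u \in X \cap z^{+}[\ve]$ and $w \in X \cap z^{-}[\ve]$, both different from $z$. Write $u = z + \lambda \ve$ and $w = z - \mu \ve$ with $\lambda,\mu>0$. Then
\[
z = \frac{\mu}{\lambda+\mu}\,u + \frac{\lambda}{\lambda+\mu}\,w ,
\]
and both coefficients lie strictly between $0$ and $1$. Hence $z \in \rbrack u,w \lbrack$ with $u \neq w$. Since $u,w \in X \subset \Conv{X}$, this contradicts $z \in \Vertices{\Conv{X}}$. Therefore, if $z^{\varepsilon}[\ve]$ meets $X$ outside $z$, then $z^{-\varepsilon}[\ve]$ cannot meet $X$ outside $z$. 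The case $\varepsilon = -$ is symmetric: exchange the roles of $\ve$ and $-\ve$.

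There is no real computational obstacle. The only delicate point is the one above: making precise that $z$ itself must be excluded from the rays, since otherwise the implication fails trivially.
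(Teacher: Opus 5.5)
Your proof is correct and follows the same route as the paper's one-line argument: a point of $X$ on each open ray would place $z$ strictly inside a segment of $\Conv{X}$, so $z$ could not be a vertex. You are also right that the rays must be taken without $z$ itself. The paper leaves this implicit in the phrase ``other points of $X$''; note that $z\in X$ holds even without finiteness, because an extreme point of $\Conv{X}$ must coincide with one of the points of $X$ in any convex combination representing it.
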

\begin{proof}
	Indeed, if both intersections are not empty then $z$ is a convex combination of others points of $X$ and cannot be a vertex of $\Conv{X}$.\qed
\end{proof}

Fully convex sets are stable with respect to orthogonal projections
along axes in $\R^d$. Let $I_d=\{1,\ldots,d\}$ and for $j \in I_d$, let
$\pi_{j}$ denote the orthogonal projector associated to the $j$-th
axis, which consists in omitting the $j$-th coordinates for all points
of $\R^d$. The image of a point in $\Z^d$ by
any axis projector is a point in $\Z^{d-1}$. Let us recall the definition of $P$-convexity \cite{feschet:2025-jmiv} and its relation with full convexity.

\begin{definition}[$P$-convexity]
  Let $X \subset \Z^d$ be a digital set. The set $X$ is {\em
    $P$-convex} if and only if $X$ is digitally 0-convex
  (i.e. $\Conv{X} \cap \Z^d = X$) and when $d > 1$, for any $j \in I_d$,
  $\pi_j(X)$ is $P$-convex in $\Z^{d-1}$.
\end{definition}

\begin{theorem}[\cite{feschet:2025-jmiv}, theorem 5.]
  \label{thm-p-convex-quiv-full-convex}
  For arbitrary dimension $d \Ge 1$, for any $X \subset \Z^d$, $X$ is
  fully convex if and only if $X$ is $P$-convex.
\end{theorem}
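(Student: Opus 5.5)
The plan is to prove the equivalence by induction on $d$, using the characterization of \RefLemma{lem-full-convexity-equ} in its cellwise form: $X$ is fully convex iff every closed cell $\TCl{c}$ meeting $\Conv{X}$ also meets $X$. Two facts will be used throughout. First, $\pi_j(\Conv{X}) = \Conv{\pi_j(X)}$, since $\pi_j$ is linear. Second, for every cell $c$ of $\C$ and every $j \in I_d$, the projection $\pi_j(\TCl{c})$ is the closure of a cell of $\Z^{d-1}$. For $d=1$, every $k$-cell with $k=1$ is an open unit interval whose closure contains two integers, and $1$-convexity follows from $0$-convexity. So both notions reduce to $0$-convexity, and the base case is immediate.

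\emph{Full convexity $\Rightarrow$ $P$-convexity.} $0$-convexity is part of full convexity, so it suffices, by induction, to show that each $\pi_j(X)$ is fully convex. Let $c'$ be a cell of $\Z^{d-1}$ whose closure meets $\Conv{\pi_j(X)} = \pi_j(\Conv{X})$ at a point $\pi_j(y)$, with $y \in \Conv{X}$. I lift $c'$ to the cell $c$ of $\Z^d$ whose closure is $\TCl{c'}$ times the interval $[\lfloor y_j \rfloor, \lfloor y_j\rfloor + 1]$ in the $j$-th coordinate, inserted at position $j$. Then $y \in \TCl{c} \cap \Conv{X}$, so full convexity gives some $x \in X \cap \TCl{c}$. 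Hence $\pi_j(x) \in \pi_j(X) \cap \TCl{c'}$.

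\emph{$P$-convexity $\Rightarrow$ full convexity.} By the induction hypothesis, every $\pi_j(X)$ is fully convex. I argue by induction on the dimension $k$ of a cell $c$ with $\TCl{c} \cap \Conv{X} \neq \emptyset$, and show that $\TCl{c}\cap X \neq \emptyset$.
\begin{itemize}
\item For $k=0$, $c$ is a lattice point of $\Conv{X}$, and $0$-convexity gives $c \in X$.
\item For $k \Ge 1$, pick a direction $j$ along which $c$ is open, so that $\TCl{c} = \pi_j(\TCl{c}) \times [a,a+1]$ in the $j$-th coordinate. Take $y \in \TCl{c}\cap\Conv{X}$. Then $\pi_j(y)$ lies in $\pi_j(\TCl{c}) \cap \Conv{\pi_j(X)}$, so full convexity of $\pi_j(X)$ yields $x \in X$ with $\pi_j(x) \in \pi_j(\TCl{c})$. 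If $x_j \in [a,a+1]$, then $x \in \TCl{c}$ and we are done.
\item Otherwise, say $x_j > a+1$ (the case $x_j<a$ is symmetric). The segment $[y,x]$ lies in $\Conv{X}$ and, by convexity, in the prism $\pi_j(\TCl{c})\times\R$. It therefore crosses the hyperplane $\{x_j = a+1\}$ at a point of $\Conv{X}$ lying in the closed face of $\TCl{c}$ at height $a+1$. That face is the closure of a $(k-1)$-cell in $\partial c$, so the induction on $k$ provides a point of $X$ in it, hence in $\TCl{c}$.
\end{itemize}
This establishes $k$-convexity for all $k$, i.e. full convexity.

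The main obstacle is the mismatch in the second direction: the witness $x$ supplied by the projection may lie far from $c$ along the $j$-axis. Projections alone do not resolve this. The device of sliding along the segment $[y,x]$ inside the prism, combined with induction on the cell dimension down to the $0$-convexity base, is what I expect to make the argument go through. The delicate point to check is that the crossing point indeed lies in a genuine lower-dimensional face of $\TCl{c}$, which holds because the prism has the same cross-section as $\TCl{c}$.
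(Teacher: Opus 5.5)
Your proof is correct. The paper gives no proof of this statement: it imports it from \cite{feschet:2025-jmiv} (Theorem~5) and uses it as a black box, for instance in Theorem~\ref{thm-par-peeling-fully-convex}. So there is no in-paper argument to compare with, and yours stands as a self-contained, elementary proof from the definition.

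You do not even need Lemma~\ref{lem-full-convexity-equ}. The ``cellwise form'' is the definition itself, because $X\subset\Conv{X}$ makes one inclusion of each $k$-convexity equality automatic.

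\textbf{Forward direction.} Lifting a cell of $\Z^{d-1}$ by the unit interval containing $y_j$, together with $\pi_j(\Conv{X})=\Conv{\pi_j(X)}$, correctly shows that projections of fully convex sets are fully convex.

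\textbf{Backward direction.} You use a nested induction, outer on $d$ and inner on the cell dimension $k$. You then slide the projected witness $x$ along $[y,x]$ inside the convex prism $\pi_j(\TCl{c})\times\R$ until it hits the facet of $\TCl{c}$ at height $a$ or $a+1$, which is a closed $(k-1)$-cell. This is exactly the right device, and every step checks.

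\textbf{What your route buys.} You only use convexity of $\Conv{X}$ and linearity of $\pi_j$, so the argument holds verbatim for infinite $X$, matching the ``for any $X\subset\Z^d$'' of the statement. It also yields a dimension-graded refinement:
\begin{itemize}
\item $(k+1)$-convexity of $X$ gives $k$-convexity of every $\pi_j(X)$;
\item $(k-1)$-convexity of $X$ and of all $\pi_j(X)$ gives $k$-convexity of $X$.
\end{itemize}

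\textbf{Thin spot.} The base case $d=1$ is under-justified. That the closure $[a,a+1]$ contains two integers is not by itself why $1$-convexity follows. What you need is this: a non-integer $y\in\rbrack a,a+1\lbrack\cap\Conv{X}$ has points of $X$ on both sides, so $a\in\Conv{X}\cap\Z=X$. Alternatively, observe that your sliding argument already covers $d=1$ if you take any $x\in X$ as the witness.
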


The property below recalls that the vertices of an axis projected convex
set are to be found among the projection of the vertices of the
original convex set.

\begin{proposition}\label{prop:projectedVtcs}
  Let $X$ be a 0-convex set. Then for each direction $i \in I_d$,
  $\pi_i(X)$ is 0-convex and $\forall z\in\Vertices{\Conv{\pi_i(X)}}$, $\exists z'\in\Vertices{\Conv{X}}$, $z = \pi_i(z')$.
\end{proposition}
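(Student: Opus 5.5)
We need to prove Proposition~\ref{prop:projectedVtcs}: if $X$ is 0-convex, then each projection $\pi_i(X)$ is 0-convex, and every vertex of $\Conv{\pi_i(X)}$ is the projection of a vertex of $\Conv{X}$.

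The plan is to use two facts. First, $\pi_i$ is a linear map, so it commutes with convex hulls: $\pi_i(\Conv{X}) = \Conv{\pi_i(X)}$. Second, a linear map sends convex combinations to convex combinations.

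\emph{The vertex statement.} Write $P=\Conv{X}$ and $V=\Vertices{P}$, so that $P=\Conv{V}$ since $X$ is finite. Then
\[
\Conv{\pi_i(X)} = \pi_i(P) = \pi_i(\Conv{V}) = \Conv{\pi_i(V)}.
\]
Thus $\Conv{\pi_i(X)}$ is the convex hull of the finite set $\pi_i(V)$. The extreme points of the convex hull of a finite set belong to that set, because an extreme point cannot be a nontrivial convex combination of the other points. Hence every $z \in \Vertices{\Conv{\pi_i(X)}}$ equals $\pi_i(z')$ for some $z' \in V$, which is the claim.

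\emph{The 0-convexity of $\pi_i(X)$.} We must show $\Conv{\pi_i(X)} \cap \Z^{d-1} \subset \pi_i(X)$; the reverse inclusion is trivial. Take an integer point $p \in \Conv{\pi_i(X)} = \pi_i(\Conv{X})$. Its fiber $L = \pi_i^{-1}(p)$ is a line parallel to the $i$-th axis, and $L \cap \Conv{X}$ is a nonempty segment. We need this segment to contain an integer point. That integer point would lie in $\Conv{X}\cap\Z^d = X$, which gives $p \in \pi_i(X)$.

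This last step is the real obstacle, and in general it is false. For example, $X=\{(0,0),(1,2)\}\subset\Z^2$ is 0-convex, yet $\pi_2(X)=\{0,1\}$ is 0-convex while $\pi_1(X)=\{0,2\}$ is not. So 0-convexity of the projection cannot follow from 0-convexity of $X$ alone. The workable route is to assume $X$ fully convex, equivalently $P$-convex by Theorem~\ref{thm-p-convex-quiv-full-convex}. Then $\pi_i(X)$ is $P$-convex, hence 0-convex, directly by the recursive definition of $P$-convexity.

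I would therefore prove the vertex part exactly as above, since it needs only linearity and finiteness. For the 0-convexity part, I would note that it relies on full convexity via Theorem~\ref{thm-p-convex-quiv-full-convex}, which is the setting in which the proposition is used in the paper. If the hypothesis really is only 0-convexity, I would flag the example above as showing the projection claim needs the stronger hypothesis.
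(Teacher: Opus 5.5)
The paper gives no proof of this proposition. It introduces it as a recalled property (``The property below recalls that\dots''), so the only comparison possible is with the statement itself.

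Your vertex argument is correct and is the standard one the statement alludes to. You use $\Conv{\pi_i(X)}=\pi_i(\Conv{X})=\Conv{\pi_i(\Vertices{\Conv{X}})}$, and then the fact that extreme points of the hull of a finite set lie in that set. You were right to invoke finiteness, which is the paper's standing assumption and is genuinely needed. For instance, $X=\Z\times\Z_{\geq 0}$ is 0-convex and $\Conv{X}$ has no vertex, yet $\Conv{\pi_1(X)}=[0,+\infty)$ has the vertex $0$.

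Your counterexample $X=\{(0,0),(1,2)\}$ is also valid. It is 0-convex, while $\pi_1(X)=\{0,2\}$ is not, so the first clause is false as stated. The paper concedes this itself in the proof of Theorem~\ref{thm-non-local-seq-peeling-fully-convex}: ``$0$-convexity is preserved by projections of full convex sets --- while it is not true in general for arbitrary $0$-convex sets.''

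The proposition is invoked only once, in the proof of Theorem~\ref{thm:BB}. There $X$ is fully convex and only the vertex clause is used. So your repair covers every use the paper makes of the result: assume full convexity, then obtain 0-convexity of $\pi_i(X)$ from $P$-convexity via Theorem~\ref{thm-p-convex-quiv-full-convex}.
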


%% file: thinning.tex
We present in this part two possible ways to peel fully convex sets
until reaching a unique point. A first way considers the vertices of
the convex hull of the digital set and their possible sequential and
parallel elimination. A second way is to peel along the bounding box
slice per slice.

The following lemma tells that we can peel a digitally convex set by
its vertices. This does not however guarantee to maintain
full convexity.

\begin{lemma}
  \label{lem-peeling-0-convex-set}
  Let $X \subset \Z^d$ be a $0$-convex set, i.e. $\Conv{X}\cap
  \Z^d=X$, and $V:=\Vertices{\Conv{X}}$. Then for any subset of
  vertices $Z \subset V$, $X \setminus Z$ is 0-convex.
\end{lemma}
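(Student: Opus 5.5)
We must show $\Conv{X\setminus Z}\cap\Z^d = X\setminus Z$. The inclusion $X\setminus Z \subset \Conv{X\setminus Z}\cap\Z^d$ is immediate. For the reverse inclusion, take $p\in\Conv{X\setminus Z}\cap\Z^d$. Since $\Conv{X\setminus Z}\subset\Conv{X}$ and $X$ is $0$-convex, we get $p\in X$. So it only remains to rule out $p\in Z$.

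Suppose, for contradiction, that $p\in Z\subset V$. Then $p$ is a vertex of $\Conv{X}$ and lies in $\Conv{X\setminus Z}$. By Carath\'eodory's theorem (or just the definition of the convex hull of a finite set), $p=\sum_i \lambda_i x_i$ with $x_i\in X\setminus Z$, $\lambda_i>0$ and $\sum_i\lambda_i=1$. Every $x_i$ differs from $p$, since $p\in Z$ and $x_i\notin Z$. Hence $p$ is a convex combination of points of $\Conv{X}$ other than itself.

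This contradicts the extremality of $p$. To make it precise with the paper's formulation of vertices (no open segment $\rbrack u,w\lbrack$ in $\Conv{X}$ contains $z$), pick an index with $\lambda_1<1$; such an index exists because $p\neq x_1$ forces at least two terms. Write $p=\lambda_1 x_1+(1-\lambda_1)q$, where $q=\sum_{i\ge2}\frac{\lambda_i}{1-\lambda_1}x_i\in\Conv{X}$. If $q\neq p$, this puts $p$ in the open segment $\rbrack x_1,q\lbrack$, a contradiction. If $q=p$, then $p=x_1$, which is impossible. So $p\notin Z$, and $X\setminus Z$ is $0$-convex.

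The only step needing care is this last one, turning a general convex combination into a genuine open segment. Everything else is a direct use of $\Conv{X\setminus Z}\subset\Conv{X}$ and the $0$-convexity of $X$. The empty-set edge case is trivially $0$-convex.
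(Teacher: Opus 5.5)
Your proof is correct and follows essentially the same route as the paper. Both arguments first use $\Conv{X\setminus Z}\subset\Conv{X}$ and the $0$-convexity of $X$ to force $p\in Z$. Both then get a contradiction with the extremality of $p$ in $\Conv{X}$, because $p$ is a convex combination of points of $X\setminus Z$, all distinct from $p$. Your explicit reduction to an open segment $\rbrack x_1,q\lbrack$ makes precise this final step, which the paper states somewhat loosely through $\Vertices{\Conv{X\setminus Z}}$.
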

\begin{proof}
  Let $X':=X \setminus Z$. We have $\Conv{X'}\cap\Z^d \supset X'$
  since $\Conv{\cdot}$ is increasing and $X' \subset \Z^d$. Let us
  thus show that $\Conv{X'}\cap\Z^d \subset X'$. This is true when
  $X'=\emptyset$ so we can suppose that $X'\neq\emptyset$, which
  implies that $\Conv{X'}\cap\Z^d\neq\emptyset$.

  We reason by contradiction: let suppose that $\exists x \in
  \Conv{X'} \cap \Z^d$, while $x \not\in X'$. We have also $x \in
  \Conv{X} \cap \Z^d$ (since $\Conv{\cdot}$ is increasing and $X'
  \subset X$). The fact that $X$ is 0-convex implies $x \in X$. It
  follows that $x \in X \setminus X'=Z \subset V$. Since $x\notin X'$
  and since $\Vertices{{\Conv{X'}}}\subset X'$,
  $x\notin\Vertices{{\Conv{X'}}}$. So, $x \in \Conv{X'}
  \cap \Z^d$ implies that $x$ is a convex combination of
  $\Vertices{{\Conv{X'}}}$. But $X'\subset X$ implies
  that $x$ is a convex combination of
  $\Vertices{{\Conv{X}}}$, which is a contradiction with $x\in V =\Vertices{\Conv{X}}$.\qed
\end{proof}
It is necessary to peel convex hull vertices if we peel $X$ one point
at a time.
\begin{lemma}
  \label{lem-peeling-seq-iff}
  Let $X \subset \Z^d$ be a 0-convex set. Let $z \in X$. Then $X
  \setminus \{z\}$ is 0-convex if and only $z$ is a vertex of
  $\Conv{X}$.
\end{lemma}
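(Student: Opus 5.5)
The proof splits into the two implications, treated separately.

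\emph{Sufficiency.} Suppose $z$ is a vertex of $\Conv{X}$. Apply \RefLemma{lem-peeling-0-convex-set} with $Z=\{z\}\subset V$. It gives directly that $X\setminus\{z\}$ is 0-convex, so nothing further is needed.

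\emph{Necessity.} I will argue by contraposition. Suppose $z\in X$ is not a vertex of $\Conv{X}$, and set $X'=X\setminus\{z\}$. The goal is to show that $z\in\Conv{X'}$. Once this is known, $z\in\Conv{X'}\cap\Z^d$ while $z\notin X'$, so $X'$ is not 0-convex.

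The key step is to establish the identity
\[
\Conv{X}=\Conv{\Vertices{\Conv{X}}}.
\]
Since $X$ is finite, $\Conv{X}$ is a polytope, and the paper has already fixed the convention that a finite convex set is the convex hull of its extreme points.

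\begin{itemize}
\item First, every vertex of $\Conv{X}$ lies in $X$. A point of $\Conv{X}$ that is not in $X$ is a nontrivial convex combination of points of $X$, so it lies in an open segment between two points of $\Conv{X}$; hence it cannot be extreme.
\item By assumption, $z$ is not one of these vertices.
\item Therefore $\Vertices{\Conv{X}}\subset X\setminus\{z\}=X'$.
\end{itemize}

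Combining these, $z\in X\subset\Conv{X}=\Conv{\Vertices{\Conv{X}}}\subset\Conv{X'}$, which is what we wanted.

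The only point that needs care is the inclusion $\Vertices{\Conv{X}}\subset X$, together with the Minkowski/Krein--Milman fact that a polytope is the hull of its vertices. Both are standard, and the paper itself states $\Vertices{Y}\subset Y$ and $Y=\Conv{\Vertices{Y}}$. A degenerate case should also be checked: if $X=\{z\}$, then $z$ is a vertex, so the contrapositive hypothesis never applies. I do not expect any real obstacle here.
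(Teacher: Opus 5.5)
Your proof is correct and follows essentially the same route as the paper: sufficiency is \RefLemma{lem-peeling-0-convex-set} with $Z=\{z\}$, and necessity rests on $\Conv{X}=\Conv{\Vertices{\Conv{X}}}$, which you use in contrapositive form (if $z$ is not a vertex, then all vertices lie in $X\setminus\{z\}$, so $z\in\Conv{X\setminus\{z\}}\cap\Z^d$). Your version also makes explicit a step the paper leaves implicit, namely that $\Conv{X\setminus\{z\}}\subsetneq\Conv{X}$ forces some vertex of $\Conv{X}$ to be missing from $X\setminus\{z\}$, whereas the paper only notes that the two vertex sets differ.
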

\begin{proof}
  $(\Leftarrow)$ by using \RefLemma{lem-peeling-0-convex-set}.

  \begin{sloppypar}
    $(\Rightarrow)$ Assume $X\setminus\lbrace z\rbrace$ is 0-convex,
    i.e. $\Conv{X\setminus\lbrace z\rbrace}\cap\Z^d = X\setminus\lbrace
    z\rbrace$. Hence, $z\notin \Conv{X\setminus\lbrace z\rbrace}$
    because $z\in\Z^d$. But since $\Conv{X} = \Conv{\Vertices{X}}$,
    $\Conv{X\setminus\lbrace z\rbrace}\subsetneq\Conv{X}$ implies that
    $\Vertices{{\Conv{X\setminus\lbrace
          z\rbrace}}}\neq\Vertices{{\Conv{X}}}$. So at
    least one point of $\Vertices{{\Conv{X}}}$ has been
    removed. Only $z$ was removed so $z$ is a vertex of $\Conv{X}$.
    \qed
  \end{sloppypar}
\end{proof}

We show below that a local test is sufficient to determine if removing a
vertex of a fully convex set keeps the full convexity property.
\begin{theorem}
  \label{thm-par-peeling-fully-convex}
  Let $X$ be a non empty finite fully convex set of $\Z^d$ and let $Z
  \subset \Vertices{\Conv{X}}$ be a subset of its convex hull
  vertices. Assume that, for every direction $i \in I_d$,
  for every $z \in Z$, we have $\{z -\ve_i, z +\ve_i\} \cap (X
  \setminus Z) \neq \emptyset$. Then $X \setminus Z$ is fully convex.
\end{theorem}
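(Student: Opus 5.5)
I plan to prove the statement by induction on the dimension $d$, relying on the characterization of full convexity as $P$-convexity (\RefTheorem{thm-p-convex-quiv-full-convex}). Set $X' := X \setminus Z$. We must check two things: that $X'$ is 0-convex, and that each axis projection $\pi_j(X')$ is $P$-convex in $\Z^{d-1}$, hence fully convex. The first point follows directly from \RefLemma{lem-peeling-0-convex-set}, since $Z \subset \Vertices{\Conv{X}}$. The case $d=1$ is then finished, because $P$-convexity reduces to 0-convexity there.

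For $d>1$, fix a direction $j \in I_d$. The key observation is that projecting along $\ve_j$ interacts well with the hypothesis in direction $j$. For every $z \in Z$, one of $z\pm\ve_j$ lies in $X'$, and so $\pi_j(z) \in \pi_j(X')$. Every other point of $X$ already lies in $X'$. Hence $\pi_j(X') = \pi_j(X)$, which is fully convex because $X$ is fully convex and fully convex sets are stable under axis projection. Thus every projection of $X'$ is unchanged, and it is $P$-convex by \RefTheorem{thm-p-convex-quiv-full-convex} applied to $\pi_j(X)$. Combined with the 0-convexity of $X'$, the definition of $P$-convexity shows that $X'$ is $P$-convex, hence fully convex.

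On closer inspection no real induction is needed. The hypothesis is imposed for every direction $i$, so all $d$ projections are preserved simultaneously, and $P$-convexity only demands 0-convexity of $X'$ plus $P$-convexity of its one-step projections. The step I expect to need the most care is the equality $\pi_j(X') = \pi_j(X)$. One might worry that the neighbour $z\pm\ve_j$ witnessing the hypothesis is itself in $Z$; this is excluded because the hypothesis explicitly intersects with $X \setminus Z$ rather than with $X$. This is exactly why the condition is stated with $X\setminus Z$: it makes parallel removal of several vertices safe.

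A secondary point is the non-emptiness of $X'$. If $X'$ were empty, the hypothesis $\{z\pm\ve_i\}\cap X' \neq \emptyset$ could not hold for any $z\in Z$, so either $Z=\emptyset$ or $X'\neq\emptyset$. In either case the argument goes through, since the empty set is trivially fully convex anyway.
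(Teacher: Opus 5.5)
Your proof is correct and follows the paper's route. You get 0-convexity of $X\setminus Z$ from \RefLemma{lem-peeling-0-convex-set}, show $\pi_i(X\setminus Z)=\pi_i(X)$ for every direction $i$ using a neighbour $z\pm\ve_i\in X\setminus Z$, and conclude with $P$-convexity and \RefTheorem{thm-p-convex-quiv-full-convex}. Your pointwise argument for the projection equality is a more direct version of the paper's bookkeeping with the sets $C=\pi_i^{-1}(\pi_i(Z))\cap X$ and $C'=\pi_i^{-1}(\pi_i(Z))\cap (X\setminus Z)$.
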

\begin{proof}
  We prove that $X \setminus Z$ is $P$-convex. First of all, the
  result is trivial if $Z$ is empty. Otherwise, the first requirement
  for $P$-convexity is that $X \setminus Z$ is 0-convex, which is a
  consequence of \RefLemma{lem-peeling-0-convex-set}. This is enough
  to conclude if $d=1$.

  Otherwise, if $d \Ge 2$, we have to prove that for every direction
  $i \in I_d$, $\pi_i(X \setminus Z)$ is $P$-convex. Let
  $X':=X \setminus Z$. Let $Y:=\pi_i(Z)$. We consider the two subsets
  of $X$ and $X'$ that projects exactly onto $Y$,
  i.e. $C:=\pi_i^{-1}(Y) \cap X$ and $C':=\pi_i^{-1}(Y) \cap X'$. It
  is clear that $C$ is non empty since $Z$ is included in $C$, and
  that $\pi_i(C)=Y$ ($Z \subset C$, $\pi_i(Z) \subset \pi_i(C) \subset
  Y$ and $\pi_i(Z) = Y$ by definition).

  Looking at $C'$, for any $y \in Y$, there is some $z \in Z$ with
  $z=\pi_i(z)$. By hypothesis $\{z -\ve_i, z +\ve_i\} \cap (X
  \setminus Z) \neq \emptyset$. Let $z' \in \{z -\ve_i, z +\ve_i\}
  \cap (X \setminus Z)$. Clearly $\pi_i(z')=\pi_i(z \pm
  \ve_i)=y$. Furthermore $z' \in X \setminus Z = X'$ and projects onto
  $Y$. So $z' \in C'$. Since for any $y \in Y$, we have found some $z'
  \in C'$ that projects onto it, we have $\pi_i(C') \supset Y$. And
  obviously $\pi_i(C')=\pi_i(\pi_i^{-1}(Y) \cap X') \subset Y$. This
  proves $\pi_i(C')=Y$.

  The sets $\pi_i(X \setminus C)$ and $\pi_i(C)$ are disjoint since
  $C$ contains all the elements of $X$ that projects onto $Y$. The
  same is true for $\pi_i(X' \setminus C')$ and $\pi_i(C')$. The
  equality $X \setminus C = X' \setminus C'$ holds (using $\pi_i(x)
  \not\in Y=\pi_i(Z)$ implies $x \not\in Z$) since:
  \begin{align*}
    x \in X' \setminus C' \Leftrightarrow x \in X, x \not\in Z, \pi_i(x) \not\in Y
    \Leftrightarrow x \in X, \pi_i(x) \not\in Y
    \Leftrightarrow x \in X \setminus C.
  \end{align*}
  \begin{align*}
    \text{Then\quad}\pi_i(X') & = \pi_i(X' \setminus C') \cup \pi_i(C') && \text{(disjoint union and $C' \subset X'$)}\\
    & = \pi_i(X' \setminus C') \cup \pi_i(C) && \text{(since $\pi_i(C')=\pi_i(C)=Y$)}\\
    & = \pi_i(X \setminus C)  \cup \pi_i(C) && \text{($X \setminus C = X' \setminus C'$)}\\
    & = \pi_i(X). && \text{(since $C \subset X$)}
  \end{align*}
  Since $X$ was fully convex by hypothesis, it is also $P$-convex, so
  its projection $\pi_i(X)$ is $P$-convex. So $\pi_i(X')$ is
  $P$-convex and \RefTheorem{thm-p-convex-quiv-full-convex} concludes.\qed
\end{proof}

The previous theorem induces a very simple algorithm to thin a fully
convex set $X$ sequentially up to a given constraint set $Y$: (i)
extract vertices of the convex hull of $X$, (ii) check if one vertex
that is not in $Y$ satisfies the theorem hypothesis, (iii) remove it
from $X$ and loop to (i) till no such vertex can be found. However it
requires to compute or update the convex hull of $X$ at each step. The
same theorem shows sufficient conditions under which we can peel $X$
in parallel. The convex hull must still be updated but for generally
less iterations.

It is also quite easy to remove points on the bounding box of a fully
convex set $X$.  We recall \cite[Lemma~5]{lachaud:2022-jmiv}: if $X$
is fully convex and $Y \subset \R^d$ is \emph{stable}, then $X \cap Y$
is fully convex. Now \cite[Lemma~6]{lachaud:2022-jmiv} states that any
half-space of integer intercept and axis normal vector is stable, and
any intersection of stable sets is itself stable. The result below is
straightforward by choosing $P$ as the half-space just
touching one side of the
bounding box.

\begin{proposition}\label{prop:bb}
  Let $X \subset \Z^d$ fully convex, non-empty, finite. Let
  $(l,u)$ be the lowest and highest points of its axis-aligned
  bounding box.
  If $P$ is a $d-1$-hyperplane orthogonal to an axis, with $l$ or $u$ in $P$, then $X \setminus P$ is fully convex.
\end{proposition}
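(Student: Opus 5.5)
The plan is to write $X \setminus P$ as $X \cap H$ for a suitable open-type half-space $H$, and then invoke \cite[Lemma~5]{lachaud:2022-jmiv} together with \cite[Lemma~6]{lachaud:2022-jmiv}, as the text preceding the statement suggests. Suppose without loss of generality that $P$ is orthogonal to axis $i$ and contains $l$, so $P=\{x\in\R^d : x_i = l_i\}$. The symmetric case $u\in P$ is handled identically, reversing the inequality.

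\textbf{Step 1 (choice of half-space).} Since $l_i$ is the minimum of the $i$-th coordinate over $X$, every point of $X$ satisfies $x_i \Ge l_i$. Because points of $X$ have integer coordinates, $x_i \neq l_i$ is equivalent to $x_i \Ge l_i + 1$. Define the closed half-space $H := \{x\in\R^d : x_i \Ge l_i+1\}$. It has integer intercept $l_i+1$ and axis-aligned normal $\ve_i$. We then get $X \setminus P = X \cap H$.

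\textbf{Step 2 (stability and conclusion).} By \cite[Lemma~6]{lachaud:2022-jmiv}, $H$ is stable. By \cite[Lemma~5]{lachaud:2022-jmiv}, $X\cap H$ is fully convex since $X$ is fully convex, and hence $X\setminus P$ is fully convex. If $X\subset P$, the result is the empty set, which is trivially fully convex. The argument needs no condition on $P$ beyond its being the face hyperplane of the bounding box.

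\textbf{Alternative route.} If one wanted to avoid the external stability lemmas, one could instead argue by $P$-convexity via \RefTheorem{thm-p-convex-quiv-full-convex}. First, $X\cap H$ is 0-convex, since $\Conv{X\cap H}\subset \Conv{X}\cap H$, and the latter meets $\Z^d$ exactly in $X\cap H$. Second, for $j=i$, the projection $\pi_i(X\cap H)$ might shrink, but it still equals $\pi_i$ of a fully convex slab intersection. For $j\neq i$, the projection $\pi_j(X\cap H)=\pi_j(X)\cap \pi_j(H)$ holds, because $\pi_j$ keeps coordinate $i$, and one would then recurse on dimension.

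The main obstacle in this alternative is the case $j=i$. There one must show that the projection of $X\cap H$ along $\ve_i$ remains $P$-convex, which essentially reproves stability. For this reason I would rely on the cited lemmas, and the only real work is then checking that $X\setminus P = X\cap H$ through the integrality of coordinates.
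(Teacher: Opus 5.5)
Your proof is correct and follows the paper's argument. The paper also writes $X \setminus P$ as the intersection of $X$ with the axis-aligned half-space of integer intercept just past the bounding-box side, which is stable by \cite[Lemma~6]{lachaud:2022-jmiv}, and concludes with \cite[Lemma~5]{lachaud:2022-jmiv}; your integrality check $X \setminus P = X \cap \{x : x_i \Ge l_i+1\}$ and the empty-set remark spell out what the paper calls straightforward, and the sketched $P$-convexity alternative is not needed.
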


We can thus remove any one of the side of $X$ and yet keep its
full convexity.

%% file: peelability.tex
Previous peelability results apply only in some circumstances
and do not show that one can peel a set one point at a time.
To provide a more precise answer on the
deletion of points in fully convex sets while maintaining the full
convexity, we therefore provide a definition of what peelable really
means.

\begin{definition}\label{peelable}
For a fully convex set $X$ and some $z \in X$, we say that $z$ is
\emph{peelable in $X$} when $z \in \Vertices{\Conv{X}}$ and, when $d
\Ge 2$, for every direction $i \in I_d$, either (i) $z
-\ve_i \in X$ or $z+\ve_i \in X$, or (ii) $\pi_i(z)$ is peelable in
$\pi_i(X)$.
\end{definition}

When $d=1$, a point $z$ is peelable as soon as it is a vertex of
$\Conv{X}$. Let us also remark that when using projections, a peelable point $z$ must but projected onto a vertex of the convex hull of the projection such that the condition $z \in \Vertices{\Conv{X}}$ must be checked again. We now explain why peelability is not a local notion. To
do this, we first prove that, for $d \Le 2$, peelability of a vertex
$z$ is indeed local, and then provides a counter example in $d \Ge 3$
of local decidability of peelability for $z$.

For some $z \in \Z^d$, we consider $\mathcal{N}_X(z) =
\Cl{\Star{z}}\cap X$ the neighborhood of $z$ in $X$. Let us show now
that peelability is locally decidable in 2d (see appendix
\ref{sec:appendix} for the complete list of local configurations and a
fast implementation).

\begin{lemma}\label{lem:local2d}
  Let  $X\subset\Z^2$, finite and full convex. Then point $z
  \in \Vertices{\Conv{X}}$ is peelable in $X$ if and only if $z$ is
  peelable in $\mathcal{N}_X(z)$.
\end{lemma}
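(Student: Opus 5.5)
The plan is to compare, condition by condition, \RefDefinition-free unfolding of Definition~\ref{peelable} for $z$ in $X$ and for $z$ in $N:=\mathcal{N}_X(z)$. The first step is to check that $N$ is fully convex, so that peelability in $N$ makes sense. Here $\Cl{\Star{z}}$ is the closed box $z+[-1,1]^2$, which is an intersection of axis half-spaces with integer intercepts. Such a box is stable by \cite[Lemma~6]{lachaud:2022-jmiv}, and \cite[Lemma~5]{lachaud:2022-jmiv} then gives that $N=X\cap(z+[-1,1]^2)$ is fully convex.

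Next I handle the vertex condition. Since $z\in N\subset X$ and $z\in\Vertices{\Conv{X}}$ by hypothesis, $z$ cannot be a convex combination of other points of $N$, so $z\in\Vertices{\Conv{N}}$. Condition (i) is trivially local: $z\pm\ve_i\in X$ if and only if $z\pm\ve_i\in N$, because $z\pm\ve_i$ lies in the box.

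It remains to compare condition (ii) for each direction $i\in\{1,2\}$. By \RefTheorem{thm-p-convex-quiv-full-convex}, $\pi_i(X)$ and $\pi_i(N)$ are $P$-convex subsets of $\Z$, i.e.\ integer intervals. In dimension $1$, "$\pi_i(z)$ peelable" means "$\pi_i(z)$ is an endpoint of the interval". Write $b:=\pi_i(z)$.

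If $b$ is an endpoint of $\pi_i(X)$, then it is an endpoint of the smaller interval $\pi_i(N)\ni b$. So "peelable in $X$" implies "peelable in $N$", and this direction is done.

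For the converse, suppose $b$ is, say, the maximum of $\pi_i(N)$ but not of $\pi_i(X)$. The goal is to exhibit a point of $X$ in the box whose $\pi_i$-coordinate is $b+1$, which is a contradiction.

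Take $x\in X$ with $\pi_i(x)\Ge b+1$. By full convexity, the segment $[z,x]$ lies in $\Conv{X}$, hence $\Star{[z,x]}\subset\Star{\Conv{X}}=\Star{X}$ by \RefLemma{lem-full-convexity-equ}. Points of $[z,x]$ arbitrarily close to $z$, with coordinate in $(b,b+1)$, lie in an open $2$-cell or a vertical open $1$-cell of the unit square(s) of the box adjacent to $z$ on the "$+$" side. The closure of that cell must therefore meet $X$. If the point of $X$ found this way has coordinate $b+1$, we are done.

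Otherwise, the only points found lie on the row of $z$, e.g.\ $z\pm\ve_j$ with $j\neq i$. In that case I replace $z$ by this neighbour and use 0-convexity. The triangle $\Conv{\{z,z\pm\ve_j,x\}}\subset\Conv{X}$ gives new cells adjacent to the box, again forced into $\Star{X}$. The aim is to show that the top row of the box must meet $X$: either the cell argument reaches it, or $z$ would lie in the open segment between $z\pm\ve_j$ and another point of $X$ on its row. The latter would contradict $z$ being a vertex, by the ray lemma of \RefSection{sec:background}.

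The main obstacle is precisely this last step, namely excluding the possibility that $X$ climbs above level $b$ only far from $z$. If the direct cell argument is insufficient, I would instead use the fact that $z$ is a vertex: there is a supporting line through $z$, which forces $\Conv{X}$ into a cone at $z$. A case analysis on the direction of this cone relative to the axes, together with $\Star{X}=\Star{\Conv{X}}$ applied to the cells of the box inside that cone, should show that the top row of the box meets $X$ whenever level $b+1$ is reached somewhere in $X$. Combining the three comparisons gives the equivalence stated in \RefLemma{lem:local2d}.
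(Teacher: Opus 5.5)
Your forward implication is correct: $N:=\mathcal{N}_X(z)$ is fully convex because the box is stable, $z$ remains a vertex, condition (i) is read inside the box, and an endpoint of the interval $\pi_i(X)$ remains an endpoint of the sub-interval $\pi_i(N)$.

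The converse, however, has a genuine gap. You try to transfer condition (ii) from $N$ to $X$ for \emph{every} direction, and this is false when $i$ is easy. Take $X=\{(0,0),(1,0),(2,0),(3,0),(3,1),(3,-1)\}$. These are exactly the lattice points of the triangle with vertices $(0,0),(3,1),(3,-1)$, and one checks $\Star{X}=\Star{\Conv{X}}$, so $X$ is fully convex. With $z=(0,0)$ and $i=1$ you get $\pi_1(N)=\{0\}$ but $\pi_1(X)=\{-1,0,1\}$. So $\pi_1(z)$ is an endpoint for $N$ and not for $X$, and the top row of the box misses $X$ although $X$ reaches level $1$. This is precisely your ``otherwise'' case, where the neighbours of $z$ on its row belong to $X$ (for $\pi_i$ these are $z\pm\ve_i$, not $z\pm\ve_j$). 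What you are missing is that (i) is identical in $X$ and in $N$. Hence (ii) only has to be transferred for hard directions, i.e.\ when $z\pm\ve_i\notin X$, and this hypothesis never enters your argument.

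Even under that hypothesis, your cell argument yields nothing. Every cell met by $[z,x]$ arbitrarily close to $z$ is incident to $z$, and $z\in X$ already lies in its closure, so the inclusion $\Star{[z,x]}\subset\Star{X}$ is automatically satisfied there. The later steps (``replace $z$ by a neighbour'', ``cone case analysis'') are left unproved.

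The missing idea is to use cells \emph{not} incident to $z$. Put $z=0$, let $i=1$ be hard, and take $x\in X$ with $x_2\Ge 1$.
\begin{itemize}
\item If $x_1=0$, 0-convexity gives $\ve_2\in X$.
\item Otherwise assume $x_1\Ge 1$. If $x_2\Ge x_1$, the segment $[z,x]$ passes through $(x_1/x_2,1)$. This point is either $\ve_1+\ve_2$, which is then in $X$ by 0-convexity, or it lies in the open edge between $\ve_2$ and $\ve_1+\ve_2$. That edge's closure avoids $z$, so \RefLemma{lem-full-convexity-equ} puts $\ve_2$ or $\ve_1+\ve_2$ in $X$.
\item If $x_1>x_2$, the segment $[z,x]$ crosses the open edge between $\ve_1$ and $\ve_1+\ve_2$. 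Since $\ve_1\notin X$, you get $\ve_1+\ve_2\in X$.
\end{itemize}
The case $x_1\Le -1$ and level $-1$ are symmetric. Thus, for a hard direction, $X$ reaches level $b\pm1$ only if $N$ does, which is exactly the converse you need.

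This is the ingredient the paper uses in its case analysis on the number of axis neighbours of $z$. When $z$ is covered by an edge of $\Conv{X}$ along a hard direction, a 1-cell not touching $X$ forces $z-\ve_1-\ve_2$ into $X$, so the obstruction is already visible in the neighbourhood.
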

\begin{proof}
 The cases depend on the number of intersections between
 $X$ and rays $z[.]$ along directions $\ve_1$ and
 $\ve_2$ (rotations excluded), $N=\sharp\{X\cap\left(z[\ve_1]\cup z[\ve_2]\right)\}$.

 \centerline{\hfill\scalebox{0.75}{\input{2dCardNone.tikz}}\hfill
     \scalebox{0.75}{\input{2dCardOnea.tikz}}$\qquad$
     \scalebox{0.75}{\input{2dCardOneb.tikz}}\hfill
     \scalebox{0.75}{\input{2dCardTwo.tikz}}\hfill}

 \textbf{($\bm{N=0}$)} $z$ is peelable in $X$ and in $\mathcal{N}_X(z)$
 because $z$ is projected at a vertex for any direction $\ve_1$ and
 $\ve_2$. \textbf{($\bm{N=1}$ left)} $z$ is peelable in $X$ and in
 $\mathcal{N}_X(z)$ because $z$ is only projected using $\ve_2$ and it
 is a vertex both in projection for $X$ and for
 $\mathcal{N}_X(z)$. \textbf{($\bm{N=1}$ right)} $z$ is not peelable along
 direction $\ve_2$ because it is covered by an edge of $\Conv{X}$, but
 it is also not peelable in $\mathcal{N}_X(z)$ along $\ve_2$ because
 it is not a vertex in projection. Note that point $e = z
 -\ve_1-\ve_2$ must belong to $X$ as $X$ is fully convex: otherwise
 $\Conv{X}$ would intersect a 1d cell $c$ not touching $X$ so $c
 \notin \Star{X}$, but $c \in \Star{\Conv{X}}$ which is a
 contradiction to \RefLemma{lem-full-convexity-equ}. \textbf{($\bm{N=2}$)}
 Point $z$ is peelable in both $X$ and $\mathcal{N}_X(z)$ because it
 has a neighbor using both direction $\ve_1$ and $\ve_2$.\qed
\end{proof}

Starting at $d=3$, peelability cannot be locally decided as explicited
in \RefFigure{fig-counter-example-3d}. It is quite easy to generalize
this example so that it requires arbitrary large neighborhoods to
decide the peelability.

\begin{figure}[t]
  \begin{center}
    \input{counter-example.tikz}
    \vspace{-5mm}
  \end{center}
  \caption{\label{fig-counter-example-3d} Counter-example to local
    decidability of peeling in 3D. The fully convex set $X$ is
    composed of the blue, green and red points (represented as
    cubes). The red point is the vertex $p$. As one can see, $p$ is locally
    peelable along each projection (see \RefFigure{fig-cfg-2d} in
    appendix). However it is not a vertex of the convex hull of the
    projection of $X$ along axis $x$ into the plane $(yz)$, since it
    lies in the middle of the edge formed by the projection of the two
    green points. So $X \setminus \{p\}$ is not $P$-convex, hence not
    full convex. }
\end{figure}

We remark first that peelability imposes no specific ordering on projections.
\begin{lemma}\label{lem:projPermutations}
  For $J=(i_1,\cdots,i_p)$, let $\pi_J := \bigcirc_{k=1..p} \pi_{i_k}$
  be the projection composing every axis projections in $J$, and
  let $\sigma_J$ be a permutation of $J$. Then $\pi_{\sigma_J} =
  \pi_J$.
\end{lemma}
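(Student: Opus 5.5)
The plan is to prove the statement by giving an explicit, order-independent description of $\pi_J$, and then observing that this description only depends on the \emph{set} $\{i_1,\ldots,i_p\}$. First I fix the convention implicit in the statement. After one projection the coordinates are renumbered, so the index $i_k$ in the composition must be read as referring to the $i_k$-th axis of the \emph{original} space $\R^d$. Equivalently, each $\pi_{i_k}$ is understood as "forget the original coordinate $i_k$", with the remaining coordinates kept in their original relative order. I also assume that the $i_k$ are pairwise distinct, since $J$ is a list of axes to omit. With this convention, the claim is the following. For any $x=(x_1,\ldots,x_d)\in\R^d$,
\[
\pi_J(x) = (x_j)_{j \in I_d\setminus\{i_1,\ldots,i_p\}},
\]
where the coordinates are listed in increasing order of $j$.

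I will prove this formula by induction on $p$. For $p=1$ it is exactly the definition of $\pi_{i_1}$. For the inductive step, I assume $\pi_{(i_1,\ldots,i_{p-1})}(x)$ is the increasing list of the $x_j$ with $j\notin\{i_1,\ldots,i_{p-1}\}$. Applying the projection that forgets original coordinate $i_p$ deletes exactly the entry $x_{i_p}$ from this list, and it keeps the other entries in the same relative order. The result is therefore the increasing list of the $x_j$ with $j\notin\{i_1,\ldots,i_p\}$. The right-hand side depends only on the set $\{i_1,\ldots,i_p\}$, and a permutation $\sigma_J$ does not change this set. Hence $\pi_{\sigma_J}(x)=\pi_J(x)$ for every $x$, which gives $\pi_{\sigma_J}=\pi_J$.

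An equivalent route avoids the explicit formula. Since adjacent transpositions generate the symmetric group, it would suffice to check that $\pi_a\circ\pi_b=\pi_b\circ\pi_a$ for distinct original axes $a,b$. Both sides simply delete the coordinates $x_a$ and $x_b$.

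The only real obstacle is the bookkeeping of indices. If one reads $\pi_{i_k}$ literally as "omit the $i_k$-th coordinate of the current, already reduced, space", the statement is false. For example, with $d=3$, $\pi_1\circ\pi_2$ and $\pi_2\circ\pi_1$ keep different original coordinates. So the main step is to state the original-axis convention clearly; once that is done, the argument above is routine.
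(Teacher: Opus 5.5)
Your proof is correct, but your main argument is organized differently from the paper's. The paper's proof is exactly your ``equivalent route''. It says, without further comment, that it suffices to treat two projections with $i\neq j$; this implicitly uses that transpositions generate all permutations and that composition is associative. It then checks $(\pi_i\circ\pi_j)(z)=z_{\overline{i,j}}=(\pi_j\circ\pi_i)(z)$.

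Your primary argument instead proves by induction the closed form $\pi_J(x)=(x_j)_{j\in I_d\setminus\{i_1,\ldots,i_p\}}$. Invariance under any permutation of $J$ then follows at once, with no reduction to transpositions. As a by-product, it identifies $\pi_J$ concretely as the coordinate projection onto the axes not listed in $J$. The paper's version is shorter, while yours is more self-contained.

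Your explicit statement of the indexing convention is also a real clarification. The paper's step $z'_{\overline{i}}=z_{\overline{i,j}}$ holds only if $z'=z_{\overline{j}}$ keeps the original coordinate labels. Your $d=3$ example correctly shows that if coordinates are renumbered after each projection, $\pi_1\circ\pi_2$ and $\pi_2\circ\pi_1$ keep different coordinates, and the lemma is false.
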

\begin{proof}
  It is sufficient to show it for two projections with directions
  $i\neq j$. Let us pick $z = (z_1,\cdots,z_d)\in\Z^d$. $\pi_k(z) =
  z_{\overline{k}}$ where $\overline{k}$ means omission of the
  coordinate of index $k$ that is the set $I_d\setminus
  \{k\}$. So $\left(\pi_i \bigcirc \pi_j\right)(z) =
  z'_{\overline{i}}$ where $z' = z_{\overline{j}}$. Since $i\neq j$,
  $z'_{\overline{i}} = z_{\overline{i,j}}$. Thus, $\left(\pi_i
  \bigcirc \pi_j\right)(z) = z_{\overline{i,j}} = z_{\overline{j,i}} =
  \left(\pi_j \bigcirc \pi_i\right)(z)$.\qed
\end{proof}

We start with an initial fully convex set $X$ in $\Z^d$. A point
$z\in\Z^d$ is peelable only if $z\in\Vertices{\Conv{X}}$. If it is the
case there exists a set of directions, called \emph{easy directions},
defined by $\Easy{X}{z} := \{i \in I_d : \exists
\varepsilon_i\in\{+1,-1\}, z+ \varepsilon_i\ve_i\in X \}$. So to check
peelability, projections will only be tested for \emph{hard
directions}, defined as $\Hard{X}{z} := I_d \setminus \Easy{X}{z}$. If
$\Hard{X}{z} = \emptyset$ then $z$ is a peelable point and all such
points are peelable.

Hard directions are stable in projection since emptyness of $z[\ve_i]$
is stable under $\pi_i$ and obviously emptyness in projection implies
emptyness in $\pi^{-1}$. So the following property holds:
\begin{lemma}\label{lem:hardProj}
    Suppose $X$ is a non empty fully convex set in $\Z^d$, and let $z\in\Vertices{\Conv{X}}$. Assume $i \neq j$ are two different directions of $I_d$ with $\pi_i(z)\in\Vertices{\Conv{\pi_i(X)}}$. We have $j\in \Hard{\pi_i(X)}{\pi_i(z)}$ $\Leftrightarrow$ $j\in \Hard{X}{z}$.
\end{lemma}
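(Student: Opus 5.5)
The plan is to unfold the definitions of easy and hard directions and compare, for the fixed direction $j \neq i$, the two pairs of neighbours: $z \pm \ve_j$ in $\Z^d$ and $\pi_i(z) \pm \ve_j$ in $\Z^{d-1}$. Here $\ve_j$ denotes, by abuse, the corresponding basis vector of $\Z^{d-1}$ after omitting coordinate $i$. The only algebraic fact needed is that $\pi_i$ is linear and $\pi_i(\ve_j)=\ve_j$ for $j\neq i$. Consequently
\[
  \pi_i(z+\varepsilon\ve_j+t\ve_i)=\pi_i(z)+\varepsilon\ve_j \quad\text{for every } t\in\Z ,
\]
that is, $\pi_i^{-1}(\pi_i(z)+\varepsilon\ve_j)\cap\Z^d$ is exactly the line $(z+\varepsilon\ve_j)[\ve_i]$. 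This is the precise form of the remark before the statement about emptiness being stable under $\pi_i$ and $\pi_i^{-1}$.

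\textbf{Direction $(\Leftarrow)$ via its contrapositive.} Assume $j\notin\Hard{X}{z}$, so some $z+\varepsilon\ve_j\in X$. Applying $\pi_i$ gives $\pi_i(z)+\varepsilon\ve_j\in\pi_i(X)$, so $j\in\Easy{\pi_i(X)}{\pi_i(z)}$. Hence hardness in the projection forces hardness in $X$. This step is routine.

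\textbf{Direction $(\Rightarrow)$ via its contrapositive.} Assume $\pi_i(z)+\varepsilon\ve_j\in\pi_i(X)$. By the displayed identity there is $x=z+\varepsilon\ve_j+t\ve_i\in X$ for some integer $t$. I would then show that this forces an element of $X$ on the line at offset $\varepsilon\ve_j$ with $t=0$, i.e.\ $z+\varepsilon\ve_j\in X$.

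\begin{enumerate}
\item Restrict to the affine plane $\Pi$ through $z$ spanned by $\ve_i,\ve_j$. The plane $\Pi$ is an intersection of half-spaces with integer intercepts and axis normals, so it is stable, and $Y:=X\cap\Pi$ is fully convex by \cite[Lemmas~5,6]{lachaud:2022-jmiv}.
\item Since $Y\subset X$ and $z\in Y$, $z$ remains a vertex of $\Conv{Y}$.
\item The hypothesis $\pi_i(z)\in\Vertices{\Conv{\pi_i(X)}}$ says that the whole column $z[\ve_i]\cap X$ projects to an extreme point. Combined with the ray lemma (only one of the rays $z^{\pm}[\ve_i]$ meets $X$), this pins $Y$ to one side.
\item Use \RefLemma{lem-full-convexity-equ} on $Y$: the cells of $\Star{\Conv{[z,x]}}$ met by the segment $[z,x]$ must have closures touching $Y$. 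I would try to propagate this along the segment to produce a point of $Y$ adjacent to $z$ in direction $\varepsilon\ve_j$, in the same way as the $N=1$ case in the proof of \RefLemma{lem:local2d}, where full convexity forced $z-\ve_1-\ve_2\in X$.
\end{enumerate}

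\textbf{Main obstacle: the case $t\neq 0$ in the $(\Rightarrow)$ step.} When $t=0$ we immediately get $z+\varepsilon\ve_j\in X$, so $j\in\Easy{X}{z}$. When $t\neq0$, however, the segment $[z,x]$ enters the open square at $z$, and the closure of that square already contains $z\in Y$. Full convexity therefore gives no new point there. I expect the two-point configuration $\{z,z+\ve_i+\ve_j\}$ in $\Z^2$ to be exactly where this argument breaks: this set appears to be fully convex, $j$ is hard at $z$ in $X$, yet it is easy after projecting along $\ve_i$. If that configuration checks out, the forward implication $j\in\Hard{X}{z}\Rightarrow j\in\Hard{\pi_i(X)}{\pi_i(z)}$ is false as worded, and only the converse (proved above) would survive. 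In that case I would first double-check the star computation for this example. If it holds, I would strengthen the hypotheses for that direction, for instance by also assuming $i\in\Hard{X}{z}$ or that $z[\ve_i]\cap X=\{z\}$, rather than claim the statement in full.
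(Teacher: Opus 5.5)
Your suspected counterexample is valid, so state it as a disproof rather than hedging. Take $d=2$, $z=(0,0)$, $i=1$, $j=2$ and $X=\{z,z+\ve_1+\ve_2\}$.
\begin{itemize}
\item \emph{$X$ is fully convex.} The open segment $\rbrack z,z+\ve_1+\ve_2\lbrack$ contains no lattice point, so $X$ is $0$-convex. Its projections $\pi_1(X)=\pi_2(X)=\{0,1\}$ are $0$-convex in $\Z$. Hence $X$ is $P$-convex, i.e.\ fully convex by \RefTheorem{thm-p-convex-quiv-full-convex}. Equivalently, the only cell met by the open diagonal is the open unit square, whose closure contains $z$, so $\Star{X}=\Star{\Conv{X}}$.
\item \emph{All hypotheses hold.} $z\in\Vertices{\Conv{X}}$, and $\pi_1(z)=0$ is an endpoint of $\Conv{\pi_1(X)}=[0,1]$.
\item \emph{The conclusion fails.} $z\pm\ve_2\notin X$ gives $2\in\Hard{X}{z}$, while $\pi_1(z)+\ve_2\in\pi_1(X)$ gives $2\notin\Hard{\pi_1(X)}{\pi_1(z)}$.
\end{itemize}
So the implication $j\in\Hard{X}{z}\Rightarrow j\in\Hard{\pi_i(X)}{\pi_i(z)}$ is false. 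This is the $(\Leftarrow)$ of the statement as written; your $(\Leftarrow)$/$(\Rightarrow)$ labels are swapped. Your steps 1--4 therefore cannot be completed, and this is exactly the $t=1$ case you isolated.

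The implication you did prove (hardness in $\pi_i(X)$ forces hardness in $X$) is correct. It needs neither full convexity nor the vertex hypotheses. It is also the only substantive content of the paper's one-line justification before the lemma: emptiness in the projection implies emptiness of the whole preimage line. The converse stability is asserted there without argument.

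Where your proposal does go wrong is the suggested repair. In the same example $z\pm\ve_1\notin X$, so $i\in\Hard{X}{z}$, and $z[\ve_1]\cap X=\{z\}$. Both extra hypotheses you propose are therefore satisfied, and the implication still fails. The set $\{z,z+\ve_1,z+\ve_1+\ve_2\}$ shows the same failure when $i$ is easy.

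What survives is only the inclusion $\Hard{\pi_i(X)}{\pi_i(z)}\subseteq\Hard{X}{z}\setminus\{i\}$, i.e.\ easy directions other than $i$ persist under projection. This is what the remark following the lemma actually uses. The reverse inclusion would require controlling the whole columns $X\cap(z\pm\ve_j)[\ve_i]$, which is just a restatement of the conclusion.

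Be aware that the proof of \RefTheorem{thm:BB} relies on precisely the failing inclusion $\Hard{X}{z_F}\setminus\{i\}\subseteq\Hard{X_i}{z_f}$. The problem you found therefore propagates beyond this lemma.
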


Suppose that $\Hard{X}{z} \neq \emptyset$. Let us remark that the
set $\Easy{X}{z}$ is preserved in the sense that for any
projection $i\in \Hard{X}{z}$ and any $\varepsilon_i\in\{+1,-1\}$,
it holds that $z+ \varepsilon_i\ve_k \in X \Rightarrow
\pi_i(z)+\varepsilon_i\pi_i(\ve_k)\in\pi_i(X)$ for $k\in
\Easy{X}{z}$. In other words, when checking the peelability of $z$
and taking projections in $\Hard{X}{z}$, we preserve the fact that
directions in $\Easy{X}{z}$ do not have to be checked. Hence, we have to study what happened when peelability conditions
are not verified for hard directions.

Point $z \in X$ is not peelable if and only if either
$z\notin\Vertices{\Conv{X}}$ or $z\in\Vertices{\Conv{X}}$ and there
exists a direction $i$ such that: (i) $z-\ve_i \notin X$ and $z+\ve_i
\notin X$ and (ii) $\pi_i(z)$ is not peelable in $\pi_i(X)$. So it is
clear that the condition stopping the iteration of the projections is
that $z\notin\Vertices{\Conv{X}}$ in some
projection. The following lemma explains this case geometrically.

\begin{lemma}\label{lem:domination}
  Let $X$ fully convex and let $z\in\Vertices{\Conv{X}}$ and a
  direction $i \in \Hard{X}{z}$. Then
  $\pi_i(z)\notin\Vertices{\pi_i\left(\Conv{X}\right)} \Leftrightarrow
  \exists F^{\varepsilon}$ face of $\Conv{X}$ such that
  $z^{\varepsilon}[\ve_i]\cap\text{\emph{relint}}(F^{\varepsilon})\neq\emptyset$
  for some $\varepsilon\in\{+1,-1\}$.
\end{lemma}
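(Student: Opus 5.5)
The plan is to work entirely in $\R^d$ with the polytope $P:=\Conv{X}$, using the identity $\pi_i(P)=\Conv{\pi_i(X)}$ (projection commutes with convex hull). The key observation is that the fibre of $\pi_i$ over $\pi_i(z)$ is exactly the line $z[\ve_i]=z^{+}[\ve_i]\cup z^{-}[\ve_i]$. Hence $\pi_i(z)$ fails to be a vertex of $\pi_i(P)$ exactly when it lies in an open segment $\rbrack \pi_i(u),\pi_i(w)\lbrack$ with $u,w\in P$. Equivalently, it fails to be a vertex exactly when the points $u,w$ can be chosen so that the segment $[u,w]$ meets the line $z[\ve_i]$ at a point distinct from the extremities, while $u,w$ themselves do not lie on that line. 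The hypothesis $i\in\Hard{X}{z}$ is used only to guarantee that $z$ has no neighbour of $X$ along $\ve_i$. I will also use the fact that the only point of $P$ on $z[\ve_i]$ that matters near $z$ is $z$ itself, together with the preceding lemma on rays: since $z$ is a vertex, $P\cap z[\ve_i]$ is a segment having $z$ as an endpoint, so it lies in one ray $z^{\varepsilon}[\ve_i]$.

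For $(\Leftarrow)$: suppose some face $F^{\varepsilon}$ of $P$ has a relative interior point $p$ on $z^{\varepsilon}[\ve_i]$. I first argue $p\neq z$. This holds because $z$ is a vertex, so $z$ lies in the relative interior only of the $0$-face $\{z\}$, and I take faces of positive dimension, which is the intended meaning. Then $p=z+t\varepsilon\ve_i$ with $t>0$. If $\dim F\Ge 1$ and $F$ is not parallel to $\ve_i$ through $p$, I pick a direction $v$ in the affine hull of $F$ with $\pi_i(v)\neq 0$; then $p\pm sv\in F$ for small $s$, so $\pi_i(p)=\pi_i(z)$ is the midpoint of two distinct points of $\pi_i(P)$ and is not a vertex. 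The remaining subcase is a $1$-dimensional face parallel to $\ve_i$. It contains $p$ in its relative interior and lies on the line $z[\ve_i]$, so it would be the segment $P\cap z[\ve_i]$ with $z$ an endpoint. The statement should then be read with faces transverse to $\ve_i$, or with relint meaning a face whose projection is not reduced to $\pi_i(z)$; I will make this convention explicit.

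For $(\Rightarrow)$: suppose $\pi_i(z)$ is not a vertex of $\pi_i(P)$. The preimage $G:=\pi_i^{-1}(\pi_i(z))\cap P$ is $P\cap z[\ve_i]$, a segment $[z,q]$ with $q\in z^{\varepsilon}[\ve_i]$ (possibly $q=z$). Let $F$ be the smallest face of $P$ containing $G$. Since $\pi_i(z)$ is not a vertex, the minimal face of $\pi_i(P)$ containing $\pi_i(z)$ has positive dimension, and its preimage in $P$ is a face $F'\supseteq G$ whose projection is not a point. This is the main obstacle: choosing a point of $G$ that lies in the relative interior of a face whose projection is nondegenerate. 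My approach is to take $F$ as the minimal face of $P$ containing the midpoint $m$ of $G$, so that $m\in\mathrm{relint}(F)$. If $\pi_i(F)=\{\pi_i(z)\}$, then $F\subset z[\ve_i]$, hence $F=G$, and $G$ would be an exposed face of $P$. Its projection $\{\pi_i(z)\}$ would then be exposed by the same linear functional, which must be constant on $\ve_i$ because $G$ is parallel to $\ve_i$. That would make $\pi_i(z)$ a vertex, a contradiction; this uses that every face of a polytope is exposed. So $\dim\pi_i(F)\Ge 1$, and $m\in z^{\varepsilon}[\ve_i]\cap\mathrm{relint}(F)$.

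Finally, I will remark that full convexity and $i\in\Hard{X}{z}$ ensure that if $G\neq\{z\}$ then $q$ is not adjacent to $z$ in a lattice sense. This does not affect the geometric argument, but it explains why the face is genuinely distant, matching the intuition in \RefFigure{fig-counter-example-3d}.
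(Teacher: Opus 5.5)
There is a genuine gap, and it sits exactly where the hypothesis $i\in\Hard{X}{z}$ has to be used. Your direction argument in $(\Leftarrow)$ and your midpoint/exposed-face argument in $(\Rightarrow)$ are sound where they apply.

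\textbf{The edge case in $(\Leftarrow)$.} The subcase of a $1$-dimensional face lying on $z[\ve_i]$ cannot be removed by changing the convention on faces; under the lemma's hypotheses it does not occur. Indeed, $0$-convexity of $X$ and $z\pm\ve_i\notin X$ give $X\cap z[\ve_i]=\{z\}$: if $z+k\ve_i\in X$ with $k\geq 1$, then $z+\ve_i\in\Conv{X}\cap\Z^d=X$, and symmetrically for $k\leq -1$. Such an edge would have two distinct endpoints, both vertices of $\Conv{X}$, hence both in $X\cap z[\ve_i]$, which is impossible. This is precisely the paper's preamble, and it is what makes the statement with positive-dimensional faces (your own ``intended meaning'') true. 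Your remark that the hypothesis ``does not affect the geometric argument'' is therefore wrong. Without it that statement fails: take $X=\{(0,0),(0,1)\}$, $z=(0,0)$, $i=2$. The edge $\Conv{X}$ has its relative interior on $z^{+}[\ve_2]$, while $\pi_2(z)$ is a vertex of $\pi_2(\Conv{X})$. By switching to faces transverse to $\ve_i$ you prove a different statement, and the equivalence of the two readings is exactly the missing step.

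\textbf{The degenerate fibre in $(\Rightarrow)$.} Your contradiction does not cover $G=\{z\}$, which you explicitly allow. A functional exposing the point $z$ need not vanish on $\ve_i$, since a single point is not ``parallel to $\ve_i$''. So nothing lets you push it down to $\pi_i(\Conv{X})$. This case needs the lifting argument with which the paper opens $(\Rightarrow)$. Suppose $\pi_i(z)=t\pi_i(u)+(1-t)\pi_i(w)$ with $u,w\in\Conv{X}$, $\pi_i(u)\neq\pi_i(w)$ and $t\in\rbrack 0,1\lbrack$. Then $tu+(1-t)w$ lies in $\Conv{X}\cap z[\ve_i]=\{z\}$, so $z\in\rbrack u,w\lbrack$, contradicting that $z$ is a vertex.

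\textbf{Comparison with the paper.} With these two repairs your route is valid, and it differs from the paper's in $(\Rightarrow)$. The paper takes the far endpoint $v$ of $z[\ve_i]\cap\Conv{X}$ and uses $v\notin\Z^d$ (again the preamble) to place $v$ in the relative interior of a positive-dimensional face. Strictly, that face must be the minimal face containing $v$; a facet through $v$ need not contain it in its relative interior. Your midpoint and exposedness argument needs no lattice input for this direction and even yields a face with non-degenerate projection.
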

\begin{proof}
  (Preamble) Since $z \in X$, $z+\ve_i\notin X$ and $z-\ve_i\notin X$,
  we have that $z + k\ve_i\notin X$ for any $k\in\Z \setminus \{0\}$
  since $X$ is fully convex ($0$-convexity suffices).

  $(\Rightarrow)$ If
  $\pi_i(z)\notin\Vertices{\pi_i\left(\Conv{X}\right)}$ then there
  exists a segment $[u,v]$ in $\pi_i\left(\Conv{X}\right)$ such that
  $\pi_i(z)\in \rbrack u,v \lbrack$. Considering a segment of the form
  $\rbrack u',v' \lbrack$ with $u'\in\pi_i^{-1}(u)$, $v'\in\pi_i^{-1}(v)$
  and $u',v'\in\Conv{X}$, we note that $z[\ve_i]\cap \rbrack u',v' \lbrack
  \neq\emptyset$ and that $\pi_i\left(z[\ve_i]\cap \rbrack u',v' \lbrack
  \right) = \pi_i(z)$ for all possible $u'$ and $v'$.

  Suppose that $z[\ve_i]\cap\Conv{X} = \{z\}$. For any $u'$ and $v'$,
  there is a point of $\rbrack u',v' \lbrack$ in $z[\ve_i]$. By
  hypothesis this point is $z$. Hence $z$ is a convex combination of
  $u'$ and $v'$ which is a contradiction. So, $z[\ve_i]\cap\Conv{X}$ is
  a segment $S$ containing $z$ as a vertex. Consider the other vertex
  $v$ of $S$ and a facet $F^{\varepsilon}$ containing it. Point $v$
  cannot be a vertex of $F^{\varepsilon}$ since this would be a point
  in $\Z^d$ other than $z$ and belonging to $z[\ve_i]$ which is
  impossible by the preamble. Hence
  $v\in\text{relint}(F^{\varepsilon})$. We get that
  $z^{\varepsilon}[\ve_i]\cap\text{relint}(F^{\varepsilon})\neq\emptyset$.

  $(\Leftarrow)$ The point $v$ in
  $z^{\varepsilon}[\ve_i]\cap\text{relint}(F^{\varepsilon})$ is not in
  $\Z^d$ with $\varepsilon\in\{+1,-1\}$. So the projections of any
  vertex of the face $F^{\varepsilon}$ are different from
  $\pi_i(z)$. Hence $\pi_i(z)$ belongs to the convex hull of the
  projections of $\Vertices{\mathcal{F^{\varepsilon}}}$ because convex
  combinations are preserved by the linear operator $\pi_i$. This
  implies that
  $\pi_i(z)\notin\Vertices{\pi_i\left(\Conv{X}\right)}$.\qed
\end{proof}

\RefLemma{lem:domination} tells us that when the projection of a vertex $z$ is covered by the projection of a face then $z$ would not be peelable. This indicates that points on the boundary of $X$ are the points which are the most difficult to cover using projections. We now give the main result for the usage of peelability which legitimates its definition.

\begin{theorem}
  \label{thm-non-local-seq-peeling-fully-convex}
  Let $X$ be a non empty finite fully convex set of $\Z^d$. Then, $z$
  is a peelable point of $X$ $\Leftrightarrow$ $X \setminus \{z\}$ is
  fully convex.
\end{theorem}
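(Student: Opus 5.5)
I will argue by induction on the dimension $d$, relying throughout on \RefTheorem{thm-p-convex-quiv-full-convex}. That theorem lets me replace ``$X\setminus\{z\}$ is fully convex'' by ``$X\setminus\{z\}$ is $P$-convex''. $P$-convexity asks for two things: $0$-convexity of $X\setminus\{z\}$, and $P$-convexity of $\pi_i(X\setminus\{z\})$ for every $i\in I_d$. The recursive \RefDefinition-style structure of peelability (Definition~\ref{peelable}) mirrors this, and the proof will match the two pieces clause by clause.

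The base case $d=1$ is exactly \RefLemma{lem-peeling-seq-iff}. A $1$-dimensional set is fully convex iff it is $0$-convex. Peelability reduces to $z\in\Vertices{\Conv{X}}$, and that lemma says $X\setminus\{z\}$ is $0$-convex iff $z$ is such a vertex.

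For the inductive step ($d\Ge 2$) I first record a key identity for a fixed direction $i$:
\[
\pi_i(X\setminus\{z\}) =
\begin{cases}
\pi_i(X) & \text{if } z\pm\ve_i\in X \text{ for some sign},\\
\pi_i(X)\setminus\{\pi_i(z)\} & \text{otherwise}.
\end{cases}
\]
This holds because, by $0$-convexity of $X$ (as in the preamble of \RefLemma{lem:domination}), $z$ is the only point of $X$ on the line $z[\ve_i]$ exactly when $z\pm\ve_i\notin X$. Also, $\pi_i(X)$ is fully convex, since $P$-convexity is inherited by projections.

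\emph{($\Rightarrow$)} Suppose $z$ is peelable. Then $z$ is a vertex, so $X\setminus\{z\}$ is $0$-convex by \RefLemma{lem-peeling-seq-iff}. Fix a direction $i$.
\begin{itemize}
\item In case (i), the identity gives $\pi_i(X\setminus\{z\})=\pi_i(X)$, which is $P$-convex (this is the argument of \RefTheorem{thm-par-peeling-fully-convex} with $Z=\{z\}$).
\item In case (ii), $\pi_i(z)$ is peelable in the fully convex set $\pi_i(X)\subset\Z^{d-1}$. The induction hypothesis then says $\pi_i(X)\setminus\{\pi_i(z)\}$ is fully convex. When (i) fails this set equals $\pi_i(X\setminus\{z\})$; when (i) holds we are already in the previous case.
\end{itemize}
Hence every projection is $P$-convex, and so is $X\setminus\{z\}$.

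\emph{($\Leftarrow$)} Suppose $X\setminus\{z\}$ is fully convex. By $0$-convexity and \RefLemma{lem-peeling-seq-iff}, $z\in\Vertices{\Conv{X}}$. Take a direction $i$ where (i) fails. The identity shows that $\pi_i(X)\setminus\{\pi_i(z)\}=\pi_i(X\setminus\{z\})$ is $P$-convex, hence fully convex. The induction hypothesis, applied to the fully convex set $\pi_i(X)$ and its point $\pi_i(z)$, then gives that $\pi_i(z)$ is peelable in $\pi_i(X)$, so (ii) holds.

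The main obstacle is making sure the induction hypothesis is applied in a form that is really an equivalence on the projected set. In particular, the vertex condition at the projected level must not be assumed; it must come out of the $0$-convexity part via \RefLemma{lem-peeling-seq-iff} in dimension $d-1$. That is why the statement is proved as a two-sided equivalence by simultaneous induction. The remaining care point is the identity above, which needs $0$-convexity to rule out far points $z+k\ve_i$ with $|k|\Ge 2$ when $z\pm\ve_i\notin X$.
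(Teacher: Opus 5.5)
Your proof is correct, and in the ($\Leftarrow$) direction it follows a genuinely different route from the paper's. For ($\Rightarrow$) the paper only sketches the same idea (``recursively use the peelability along projections, as in the definition of $P$-convexity''). Your identity, $\pi_i(X\setminus\{z\})=\pi_i(X)$ when some $z\pm\ve_i\in X$ and $\pi_i(X\setminus\{z\})=\pi_i(X)\setminus\{\pi_i(z)\}$ otherwise, is what makes that recursion precise. For ($\Leftarrow$) the paper argues geometrically in $\R^d$. After obtaining $z\in\Vertices{\Conv{X}}$, it takes a hard direction $i$ and supposes $z[\ve_i]\cap\Conv{X}\neq\{z\}$. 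It then shows that the open $1$-cell $]z,z+\varepsilon\ve_i[$ meets $\Conv{X\setminus\{z\}}$ while neither endpoint lies in $X\setminus\{z\}$, which contradicts $1$-convexity. Finally it uses \RefLemma{lem:domination} to turn $z[\ve_i]\cap\Conv{X}=\{z\}$ into $\pi_i(z)\in\Vertices{\pi_i(\Conv{X})}$. That argument gives geometric insight into why peeling fails: a face of the hull shadows $z$ along a hard axis. However, it only establishes the vertex condition at the first projection level. The paper's closing step ``hence $z$ is a peelable point'' silently needs the deeper levels of the recursive definition once $d\Ge 3$. Your route instead transfers full convexity of $X\setminus\{z\}$ to $\pi_i(X)\setminus\{\pi_i(z)\}$ through $P$-convexity and then applies the induction hypothesis. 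It avoids faces and cells entirely, reduces every level to \RefLemma{lem-peeling-seq-iff}, and covers the whole recursion. This is exactly why proving both directions simultaneously by induction on $d$ is the right setup.
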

\begin{proof}
  $(\Rightarrow)$ If $z$ is a peelable point of $X$, then
  $z\in\Vertices{\Conv{X}}$. We know that vertex removal preserves
  $0$-convexity (\RefTheorem{thm-par-peeling-fully-convex}) when
  condition (i) of peelability is verified for all directions.
  Else we
  can recursively use the peelability along projections, as in the
  definition of $P$-convexity. Noticing that $0$-convexity is
  preserved by projections of full convex sets - while it is not true in general for arbitrary $0$-convex sets - and using
  \RefTheorem{thm-p-convex-quiv-full-convex}, it is clear that
  removing peelable points preserves full convexity.

  $(\Leftarrow)$ Suppose that both $X$ and $X \setminus\{z\}$ are full
  convex. We cannot have $\Conv{X} = \Conv{X\setminus\{z\}}$ because
  it would imply that $\Conv{X}\cap\Z^d =
  \Conv{X\setminus\{z\}}\cap\Z^d$ which is impossible because $z$ is
  in the former set but not in the later one. So since $\Conv{X} \neq
  \Conv{X\setminus\{z\}}$ then $\exists z_0\in\Vertices{\Conv{X}}$
  such that $z_0\notin\Vertices{\Conv{X\setminus\{z\}}}$. So in
  particular $z_0\notin X\setminus\{z\}$. But $X
  \setminus \left(X\setminus\{z\}\right) = \{z\}$. Hence
  $z = z_0$ and $z\in\Vertices{\Conv{X}}$.

  If $\Easy{X}{z}=I_d$ then $z$ is peelable. Otherwise $\Hard{X}{z}$ is
    not empty and let $i \in \Hard{X}{z}$.
  We consider $\mathcal{Z}_i = z[\ve_i]\cap \Conv{X}$. Assume
  $\mathcal{Z}_i\neq \{z\}$. So we consider the farthest face
  $F^{\varepsilon}$ such that $z[\ve_i]\cap
  F^{\varepsilon}\neq\emptyset$ using direction $\varepsilon\ve_i$
  from $z$. By construction $F^{\varepsilon}$ if also a face of
  $\Conv{X\setminus\{z\}}$. But $F^{\varepsilon}$ separates $z$ from
  $z+\varepsilon e_i\notin X$. Consider now the 1d cell $c\in\C$ whose
  realization is $\|c\| = \rbrack z ; z + \varepsilon\ve_i
  \lbrack$. It is clear that $\|c\| \cap
  F^{\varepsilon}\neq\emptyset$. Since $F^{\varepsilon}$ is a face of
  $\Conv{X\setminus\{z\}}$, we have $\|c\| \cap
  \Conv{X\setminus\{z\}}\neq\emptyset$. So since $X\setminus\{z\}$ is
  full convex, we must have that
  $\Vertices{\|c\|}\cap\left(X\setminus\{z\}\right)\neq\emptyset$. This
  is a contradiction since neither points are in $X\setminus\{z\}$.

  So, for any $i \in \Hard{X}{z}$, we must have $\mathcal{Z}_i = z[\ve_i]\cap
  \Conv{X} = \{z\}$. But using \RefLemma{lem:domination}, this implies
  that $\pi_i(z)\in\Vertices{\pi_i(\Conv{X})}$ for direction
  $i$. Hence $z$ is a peelable point.\qed
\end{proof}

Thanks to \RefTheorem{thm-non-local-seq-peeling-fully-convex}, we know
that peelable points are necessary and sufficient to peel fully convex
sets while maintaining full convexity. However, we have to prove that
such points exist for every non empty fully convex set. This is done
by the following theorem (see \RefFigure{fig-bbox} for an
illustration).

\begin{figure}[t]
  \begin{center}
    \input{bbox-example.tikz}
  \end{center}
  \caption{\label{fig-bbox} Any facet of the bounding box of a fully convex set $X$ contains a peelable point (displayed in green).}
\end{figure}

\begin{theorem}\label{thm:BB}
    Suppose $X$ is a non empty fully convex set in $\Z^d$. Then, any facet of the bounding box of $X$ contains a peelable point.
\end{theorem}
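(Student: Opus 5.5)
Thanks to \RefTheorem{thm-non-local-seq-peeling-fully-convex}, it suffices to exhibit, on any given facet of the bounding box, a point $z$ of $X$ such that $X\setminus\{z\}$ is fully convex. We argue by induction on the dimension $d$, working directly with the recursive definition of peelability (Definition~\ref{peelable}). For $d=1$ the bounding box is $\Conv{X}$, its facets are its two endpoints, and each is a vertex, hence peelable. For $d\Ge 2$, fix a facet, say $H=\{x_j=m\}$ with $m$ the minimum of the $j$-th coordinate over $X$. The slice $X_H:=X\cap H$ is non-empty. By \cite[Lemma~5,6]{lachaud:2022-jmiv} it is fully convex, since $H$ is an intersection of two half-spaces of integer intercept. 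It is naturally a fully convex set of $\Z^{d-1}$.

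\textbf{Step 1 (candidate).} We choose $z$ recursively inside $X_H$. Apply the same construction to $X_H$ in dimension $d-1$, using a facet of its bounding box; for definiteness, take a lexicographic choice, minimizing successively the other coordinates. This yields a point $z$ that is lexicographically extreme in $X$ with respect to the order $(x_j, x_{k_1},\dots)$. Such a point is an exposed vertex of $\Conv{X}$: a small perturbation of the linear functional $x_j+\epsilon x_{k_1}+\epsilon^2x_{k_2}+\dots$ is minimized uniquely at $z$. So $z\in\Vertices{\Conv{X}}$.

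\textbf{Step 2 (checking the directions).} For a direction $i\neq j$, either $i$ is easy, or it is hard. In the hard case we must show that $\pi_i(z)$ is peelable in $\pi_i(X)$. The key observation is that $\pi_i(z)$ is again the lexicographic minimizer of $\pi_i(X)$ for the induced order, with coordinate $i$ dropped. Indeed, $\pi_i(X)$ is fully convex and its points come from points of $X$. Moreover, $z$ minimizes $x_j$ and then the remaining coordinates in order. Since the order on $\pi_i(X)$ ignores $x_i$ and $z$ has no other $X$-point on its $\ve_i$-line, $\pi_i(z)$ is the lexicographic minimizer there. We then conclude by induction on $d$, stating the induction hypothesis as: ``the lexicographic minimizer of a nonempty fully convex set is peelable'' for all orders of coordinates. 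This is stronger and closes the recursion, since \RefLemma{lem:projPermutations} lets us ignore the order of the projections. The direction $i=j$ is handled the same way: $\pi_j(z)$ is the lexicographic minimizer of $\pi_j(X)$ for the remaining order, because every point of $X$ with smaller later coordinates would have $x_j>m$. Here we must exclude the case where such a point projects below $\pi_j(z)$. This is where the argument is delicate, and the fix is to let $x_j$ be the last key rather than the first when handling direction $j$.

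\textbf{Main obstacle.} The delicate point is the compatibility between lexicographic extremality and projection. A point of $X$ off the hyperplane $H$ may project lexicographically below $\pi_j(z)$, or even make $\pi_i(z)$ a non-vertex, as in the situation of \RefLemma{lem:domination}. To handle this, we first reformulate the induction hypothesis purely as: ``for every order of the coordinates, the lex-min of a fully convex set is peelable.'' We then check that the lex-min of $X$ for the order $(k_1,\dots,k_d)$ projects under $\pi_{k_r}$ onto the lex-min of $\pi_{k_r}(X)$ for the order with $k_r$ removed. This holds whenever $k_r$ is a hard direction, since then $z$ is alone on its $\ve_{k_r}$-line. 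If this projection claim holds, the recursion closes immediately. If it fails in some case, the fallback is to use \RefLemma{lem:domination} together with full convexity (the star condition of \RefLemma{lem-full-convexity-equ}) to show that no face of $\Conv{X}$ can cover $z$ along a hard direction of the lexicographic extreme point.
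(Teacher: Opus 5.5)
There is a genuine gap, and it is exactly where you flagged the argument as delicate. Your strengthened hypothesis, ``for every order of the coordinates, the lexicographic minimizer of a nonempty fully convex set is peelable'', is false.

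Take $X=\{(0,0),(0,1),(1,-1)\}\subset\Z^2$. It is $0$-convex, since the triangle has area $\frac12$ and so contains no other lattice point. Both axis projections are intervals, so $X$ is $P$-convex, hence fully convex. On the facet $x_1=0$, your rule (order $(x_1,x_2)$) selects $z=(0,0)$. Direction $1$ is hard because $(\pm1,0)\notin X$. Since $\pi_1(X)=\{-1,0,1\}$, the point $\pi_1(z)$ is not even a vertex of $\Conv{\pi_1(X)}$. Concretely, $X\setminus\{z\}=\{(0,1),(1,-1)\}$ is not fully convex: its hull crosses the open edge $\rbrack(0,0),(1,0)\lbrack$ at $(\frac12,0)$.

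The proposed fix is not available. $z$ is pinned down by an order with $x_j$ first, which is what puts it on $H$. Re-ordering with $x_j$ last selects a different point, here $(1,-1)\notin H$.

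The fallback fails on the same example. $z^{+}[\ve_1]$ meets the relative interior of the edge $[(0,1),(1,-1)]$ of $\Conv{X}$, which is precisely the covering situation of \RefLemma{lem:domination}. Full convexity of $X$ does not exclude it, because that open edge has $z$ in its closure. The reflected set $\{(0,0),(0,-1),(1,1)\}$ defeats ``min $x_1$, then max $x_2$'' in the same way. So no fixed lexicographic rule, with any choice of signs, can work.

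Your step for $i\neq j$ also fails once $d\Ge 3$. Being alone on its $\ve_i$-line does not prevent a point of $X$ from agreeing with $z$ on the keys preceding $i$, exceeding it in $x_i$, and being smaller on a later key. Take $X=\{(0,0,0),(0,0,1),(0,1,-1)\}$ (the previous example placed in the plane $x_1=0$) with the order $(x_1,x_2,x_3)$. Then $z=(0,0,0)$ and direction $2$ is hard. But $\pi_2(X)=\{(0,0),(0,1),(0,-1)\}$, so $\pi_2(z)=(0,0)$ is not a vertex of $\Conv{\pi_2(X)}$.

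The missing idea is that the point on the facet must be chosen according to where the rest of $X$ lies, not by a fixed rule. In $\Z^2$, let the slice on the facet $x_1=m$ be $\{m\}\times[a,b]$; direction $2$ is harmless for both of its endpoints.
\begin{itemize}
\item Suppose some point $q\in X$ with $x_1>m$ has $x_2\Le b$. Then $(m+1,b)\in X$ or $b=\max\pi_1(X)$. Otherwise a point $p$ with $p_2>b$ exists, and $p$, $q$, $(m,b)$ force $(m+1,b)\in\Conv{X}\cap\Z^2=X$. Hence $(m,b)$ is peelable.
\item If no such $q$ exists, then $a=\min\pi_1(X)$ and $(m,a)$ is peelable.
\end{itemize}
In the example this yields $(0,1)$, not the lexicographic minimizer.

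The paper uses no selection rule of your kind. It lifts a peelable point $z_f$ of $\pi_i(X)$, with $i$ parallel to the facet, to a vertex $z_F$ of $\Conv{X}$ on the facet. It then transfers the remaining hard directions through \RefLemma{lem:hardProj}. The same example shows that this lift must also be chosen with care. Both $(0,0)$ and $(0,1)$ lie over the peelable point $0$ of $\pi_2(X)=\{0,1\}$. Direction $1$ is hard at $(0,0)$ in $X$ but easy at $0$ in $\pi_2(X)$, so the transfer cannot be applied to an arbitrary lift. Whichever route you take, the real work is controlling the hard directions of the chosen point, above all the direction normal to the facet.
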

\begin{proof}
    We are going to prove the theorem by recursion on the
    dimension. To do this, we need to remark that for any direction
    $i\in I_d$ the bounding box facets for the projection $\pi_i(X)$
    are precisely the projections of the facets of the bounding box of
    $X$. Thus, any facet $f$ of the bounding box of
    $\pi_i(X)$ can be written as $\pi_i(F)$ where $F$ is a facet of
    the bounding box of $X$. We can made this fact precise simply by
    looking at the facets of the bounding box of $X$ whose normal
    vectors are the $\pm\ve_j$ for $j\neq i$. Indeed, these are
    precisely the facets whose projections define the bounding box of
    $\pi_i(X)$.

    Let us now consider the starting case: $d=1$. As previously seen,
    a point $z$ is peelable for $d=1$ if and only if $z$ is a vertex
    of $X$. So $X$ has one peelable point if $X = \{z\}$ and two
    peelable points when $X$ has two points or more. So the property
    is true for $d=1$.

    Let us suppose that for any dimension less than $d$, the property
    is true. We consider a direction $i\in I_d$. Let us denote by
    $X_i:=\pi_i(X)$. Let us consider a facet $f$ of the bounding box
    of $X_i$. As recalled previously, $f = \pi_i(F)$ where $F$ is a
    facet of the bounding box of $X$. By hypothesis, the dimension of
    $X_i$ is strictly lower than the dimension of $X$. Thus applying
    the hypothesis since $\pi_i(X)$ is fully convex, we get that in
    $f$, there exists a peelable point $z_f$. Using
    \RefLemma{prop:projectedVtcs}, we now that, $z_f$ being a vertex
    of $\Conv{X_i}$ in $f$, it is the projection of a vertex $z_F$ of
    $\Conv{X}$. By construction $z_F\in F$. But since we know that
    hard directions are stable by projection
    (\RefLemma{lem:hardProj}), we know that all directions which are
    hard in projections behave well because $z_f$ is peelable. The
    peelability of $z_F$ is thus solely govern by the remaining
    direction $i$. There are two cases. If $i\in \Easy{X}{z_F}$ then
    by definition $z_F$ is peelable. If $i\in \Hard{X}{z_F}$, then we
    note that $z_f\in\Vertices{X_i}$ and thus direction $i$ behave
    well in the sense that $\pi_i(z_F) = z_f$ is a vertex of
    $X_i$. Hence again $z_F$ is peelable. Overall we obtain that the
    face $F$ has a peelable point $z_F$. But, since the all the facets
    of the bounding box of $X$ are constructible by the facets of the
    bounding box of $X_i$ (by considering all the
      directions $i \in I_d$), we obtain that the property of true
    for dimension $d$.  \qed
\end{proof}

%% file: 2dCardNone.tikz
\begin{tikzpicture}[x=0.4cm,y=0.4cm]
	\draw[help lines,step=0.4cm] (-2,-2) grid (2,2);
	\draw[draw,fill=blue,color=blue,fill opacity=0.2] (0,0) -- (0.6,-2) -- (2,-2) -- (2,-0.6) -- (0,0);
	\foreach \x/\y in {0/0,1/-1} {
		\draw[fill=black,color=black] (\x,\y) circle (0.5mm);
	}
	\foreach \x/\y in {-1/0,1/0,0/-1,0/1,-1/1,1/1,-1/-1} {
		\draw[color=black] (\x-0.2,\y-0.2) rectangle ++(0.4,0.4);
	}
	\node at (1.5,-1.5) {$X$};
	\node at (0,-3) {$N=0$};
\end{tikzpicture}

%% file: 2dCardOnea.tikz
\begin{tikzpicture}[x=0.4cm,y=0.4cm]
	\draw[help lines,step=0.4cm] (-2,-2) grid (2,2);
	\draw[draw,fill=blue,color=blue,fill opacity=0.2] (0,0) -- (0.6,2) -- (2,2) -- (2,-2) -- (2,-2) -- (0.6,-2) -- (0,0);
	\foreach \x/\y in {0/0,1/0} {
		\draw[fill=black,color=black] (\x,\y) circle (0.5mm);
	}
	\foreach \x/\y in {-1/0,0/-1,0/1,-1/1,-1/-1} {
		\draw[color=black] (\x-0.2,\y-0.2) rectangle ++(0.4,0.4);
	}
	\foreach \x/\y in {1/1,1/-1} {
		\node at (\x,\y) {?};
	}
	\node at (1.5,0.5) {$X$};
	\node at (0,-3) {$N=1$};
\end{tikzpicture}

%% file: 2dCardOneb.tikz
\begin{tikzpicture}[x=0.4cm,y=0.4cm]
	\draw[help lines,step=0.4cm] (-2,-2) grid (2,2);
	\draw[draw,fill=blue,color=blue,fill opacity=0.2] (0,0) -- (-2,-0.7) -- (-2,-1.7) -- (2,0) -- (2,0.5) -- (0,0);
	\foreach \x/\y in {0/0,1/0,-1/-1} {
		\draw[fill=black,color=black] (\x,\y) circle (0.5mm);
	}
	\foreach \x/\y in {-1/0,0/-1,0/1,-1/1} {
		\draw[color=black] (\x-0.2,\y-0.2) rectangle ++(0.4,0.4);
	}
	\foreach \x/\y in {1/1} {
		\node at (\x,\y) {?};
	}
	\node at (-1.5,-1) {$X$};
	\node at (0,-3) {$N=1$};
\end{tikzpicture}

%% file: 2dCardTwo.tikz
\begin{tikzpicture}[x=0.4cm,y=0.4cm]
	\draw[help lines,step=0.4cm] (-2,-2) grid (2,2);
	\foreach \x/\y in {0/0,1/0,0/-1} {
		\draw[fill=black,color=black] (\x,\y) circle (0.5mm);
	}
	\foreach \x/\y in {-1/0,0/1,-1/1} {
		\draw[color=black] (\x-0.2,\y-0.2) rectangle ++(0.4,0.4);
	}
	\foreach \x/\y in {1/1,-1/-1} {
		\node at (\x,\y) {?};
	}
	\node at (0,-3) {$N=2$};
\end{tikzpicture}

%% file: counter-example.tikz
\tikzset{math3d/.style={x= {(-0.5cm,-0.35cm)}, z={(0cm,0.8cm)},y={(0.7cm,-0.2cm)}}}
\NewDocumentCommand{\DrawVoxel}{m m m m O{1.0}}{%
  \draw[fill=#4!10!white,draw=black,thin,opacity=#5] (#1-0.5,#2-0.5,#3+0.5) -- (#1+0.5,#2-0.5,#3+0.5) -- (#1+0.5,#2+0.5,#3+0.5) -- (#1-0.5,#2+0.5,#3+0.5) -- cycle;
  \draw[fill=#4!30!white,draw=black,thin,opacity=#5] (#1+0.5,#2-0.5,#3-0.5) -- (#1+0.5,#2-0.5,#3+0.5) -- (#1+0.5,#2+0.5,#3+0.5) -- (#1+0.5,#2+0.5,#3-0.5) -- cycle;
  \draw[fill=#4!40!white,draw=black,thin,opacity=#5] (#1-0.5,#2+0.5,#3-0.5) -- (#1+0.5,#2+0.5,#3-0.5) -- (#1+0.5,#2+0.5,#3+0.5) -- (#1-0.5,#2+0.5,#3+0.5) -- cycle;
}
\NewDocumentCommand{\DrawPoint}{m m m m O{1.0}}{%
  \draw[fill=#4,draw=black,thin,opacity=#5] (#1,#2,#3) node {\color{#4!50!white}$\bullet$};
}
  
\def\ProjX{-2.0}
\def\ProjY{-4.0}
\def\ProjZ{-2.0}
\NewDocumentCommand{\ProjectVoxel}{m m m m O{\ProjX} O{\ProjY} O{\ProjZ}}{%
  \draw[fill=#4!10!white,draw=black,thin] (#1-0.5,#2-0.5,#7) -- (#1+0.5,#2-0.5,#7) -- (#1+0.5,#2+0.5,#7) -- (#1-0.5,#2+0.5,#7) -- cycle;
  \draw[fill=#4!30!white,draw=black,thin] (#5,#2-0.5,#3-0.5) -- (#5,#2-0.5,#3+0.5) -- (#5,#2+0.5,#3+0.5) -- (#5,#2+0.5,#3-0.5) -- cycle;
  \draw[fill=#4!40!white,draw=black,thin] (#1-0.5,#6,#3-0.5) -- (#1+0.5,#6,#3-0.5) -- (#1+0.5,#6,#3+0.5) -- (#1-0.5,#6,#3+0.5) -- cycle;
}
\begin{tikzpicture}[math3d,scale=0.8]
  \draw[draw=black,line width=.5mm,->] (\ProjX,\ProjY,\ProjZ) -- (5,\ProjY,\ProjZ) node[left] {$x$};
  \draw[draw=black,line width=.5mm,->] (\ProjX,\ProjY,\ProjZ) -- (\ProjX,5,\ProjZ) node[right] {$y$};
  \draw[draw=black,line width=.5mm,->] (\ProjX,\ProjY,\ProjZ) -- (\ProjX,\ProjY,2) node[above] {$z$};
  \ProjectVoxel{1}{4}{0}{green}
  \ProjectVoxel{2}{2}{0}{blue}
  \ProjectVoxel{2}{3}{0}{blue}
  \ProjectVoxel{3}{1}{1}{blue}
  \ProjectVoxel{3}{2}{0}{blue}
  \ProjectVoxel{4}{0}{1}{blue}
  \ProjectVoxel{4}{0}{2}{green}
  \ProjectVoxel{4}{1}{1}{blue}
  \ProjectVoxel{2}{2}{1}{red}
  \draw[thin,draw=black] (\ProjX,2-1.5,1-1.5) -- (\ProjX,2+1.5,1-1.5) -- (\ProjX,2+1.5,1+1.5) -- (\ProjX,2-1.5,1+1.5) -- cycle;
  \draw[thin,draw=black] (2-1.5,\ProjY,1-1.5) -- (2+1.5,\ProjY,1-1.5) -- (2+1.5,\ProjY,1+1.5) -- (2-1.5,\ProjY,1+1.5) -- cycle;
  \draw[thin,draw=black] (2-1.5,2-1.5,\ProjZ) -- (2+1.5,2-1.5,\ProjZ) -- (2+1.5,2+1.5,\ProjZ) -- (2-1.5,2+1.5,\ProjZ) -- cycle;
  \DrawVoxel{1}{4}{0}{green}
  \DrawVoxel{2}{2}{0}{blue}
  \DrawVoxel{2}{3}{0}{blue}
  \DrawVoxel{3}{1}{1}{blue}
  \DrawVoxel{3}{2}{0}{blue}
  \DrawVoxel{4}{0}{1}{blue}
  \DrawVoxel{4}{0}{2}{green}
  \DrawVoxel{4}{1}{1}{blue}
  \DrawVoxel{2}{2}{1}{red}[0.6]
\end{tikzpicture}

%% file: bbox-example.tikz
\tikzset{math3d/.style={x= {(-0.5cm,-0.35cm)}, z={(0cm,0.8cm)},y={(0.7cm,-0.2cm)}}}
\NewDocumentCommand{\DrawVoxel}{m m m m O{0.75}}{%
  \draw[draw=black,thin,dotted,opacity=#5] (#1-0.5,#2-0.5,#3-0.5) -- (#1+0.5,#2-0.5,#3-0.5) -- (#1+0.5,#2+0.5,#3-0.5) -- (#1-0.5,#2+0.5,#3-0.5) -- cycle;
  \draw[draw=black,thin,dotted,opacity=#5] (#1-0.5,#2-0.5,#3-0.5) -- (#1-0.5,#2-0.5,#3+0.5) -- (#1-0.5,#2+0.5,#3+0.5) -- (#1-0.5,#2+0.5,#3-0.5) -- cycle;
  \draw[draw=black,thin,dotted,opacity=#5] (#1-0.5,#2-0.5,#3-0.5) -- (#1+0.5,#2-0.5,#3-0.5) -- (#1+0.5,#2-0.5,#3+0.5) -- (#1-0.5,#2-0.5,#3+0.5) -- cycle;
  \draw[fill=#4!10!white,draw=black,thin,opacity=#5] (#1-0.5,#2-0.5,#3+0.5) -- (#1+0.5,#2-0.5,#3+0.5) -- (#1+0.5,#2+0.5,#3+0.5) -- (#1-0.5,#2+0.5,#3+0.5) -- cycle;
  \draw[fill=#4!30!white,draw=black,thin,opacity=#5] (#1+0.5,#2-0.5,#3-0.5) -- (#1+0.5,#2-0.5,#3+0.5) -- (#1+0.5,#2+0.5,#3+0.5) -- (#1+0.5,#2+0.5,#3-0.5) -- cycle;
  \draw[fill=#4!40!white,draw=black,thin,opacity=#5] (#1-0.5,#2+0.5,#3-0.5) -- (#1+0.5,#2+0.5,#3-0.5) -- (#1+0.5,#2+0.5,#3+0.5) -- (#1-0.5,#2+0.5,#3+0.5) -- cycle;
}
\NewDocumentCommand{\DrawPoint}{m m m m O{1.0}}{%
  \draw[fill=#4,draw=black,thin,opacity=#5] (#1,#2,#3) node {\color{#4!50!white}$\bullet$};
}
  
\def\ProjX{-2.0}
\def\ProjY{-4.0}
\def\ProjZ{-2.0}
\NewDocumentCommand{\ProjectVoxel}{m m m m O{\ProjX} O{\ProjY} O{\ProjZ}}{%
  \draw[fill=#4!10!white,draw=black,thin] (#1-0.5,#2-0.5,#7) -- (#1+0.5,#2-0.5,#7) -- (#1+0.5,#2+0.5,#7) -- (#1-0.5,#2+0.5,#7) -- cycle;
  \draw[fill=#4!30!white,draw=black,thin] (#5,#2-0.5,#3-0.5) -- (#5,#2-0.5,#3+0.5) -- (#5,#2+0.5,#3+0.5) -- (#5,#2+0.5,#3-0.5) -- cycle;
  \draw[fill=#4!40!white,draw=black,thin] (#1-0.5,#6,#3-0.5) -- (#1+0.5,#6,#3-0.5) -- (#1+0.5,#6,#3+0.5) -- (#1-0.5,#6,#3+0.5) -- cycle;
}
\begin{tikzpicture}[math3d,scale=0.8]
  \draw[draw=black,line width=.5mm,->] (\ProjX,\ProjY,\ProjZ) -- (5,\ProjY,\ProjZ) node[left] {$x$};
  \draw[draw=black,line width=.5mm,->] (\ProjX,\ProjY,\ProjZ) -- (\ProjX,5,\ProjZ) node[right] {$y$};
  \draw[draw=black,line width=.5mm,->] (\ProjX,\ProjY,\ProjZ) -- (\ProjX,\ProjY,2) node[above] {$z$};
\ProjectVoxel{2}{1}{2}{blue}
\ProjectVoxel{2}{2}{2}{blue}
\ProjectVoxel{2}{0}{2}{green}
\ProjectVoxel{2}{3}{1}{blue}
\ProjectVoxel{3}{1}{2}{blue}
\ProjectVoxel{3}{2}{1}{blue}
\ProjectVoxel{3}{2}{2}{blue}
\ProjectVoxel{3}{3}{1}{blue}
\ProjectVoxel{3}{3}{2}{blue}
\ProjectVoxel{3}{4}{2}{blue}
\ProjectVoxel{3}{5}{2}{green}
\ProjectVoxel{4}{2}{1}{blue}
\ProjectVoxel{4}{2}{2}{blue}
\ProjectVoxel{4}{2}{3}{green}
\ProjectVoxel{4}{3}{1}{blue}
\ProjectVoxel{4}{3}{2}{blue}
\ProjectVoxel{4}{4}{1}{blue}
\ProjectVoxel{5}{3}{0}{green}
  
  \draw[thin,draw=black] (\ProjX,0-0.5,0-0.5) -- (\ProjX,5+0.5,0-0.5) -- (\ProjX,5+0.5,3+0.5) -- (\ProjX,0-0.5,3+0.5) -- cycle;
  \draw[thin,draw=black] (2-0.5,\ProjY,0-0.5) -- (5+0.5,\ProjY,0-0.5) -- (5+0.5,\ProjY,3+0.5) -- (2-0.5,\ProjY,3+0.5) -- cycle;
  \draw[thin,draw=black] (2-0.5,0-0.5,\ProjZ) -- (5+0.5,0-0.5,\ProjZ) -- (5+0.5,5+0.5,\ProjZ) -- (2-0.5,5+0.5,\ProjZ) -- cycle;
\DrawVoxel{2}{0}{2}{green}
\DrawVoxel{2}{1}{2}{blue}
\DrawVoxel{2}{2}{2}{blue}
\DrawVoxel{2}{3}{1}{blue}
\DrawVoxel{3}{1}{2}{blue}
\DrawVoxel{3}{2}{1}{blue}
\DrawVoxel{3}{2}{2}{blue}
\DrawVoxel{3}{3}{1}{blue}
\DrawVoxel{3}{3}{2}{blue}
\DrawVoxel{3}{4}{2}{blue}
\DrawVoxel{3}{5}{2}{green}
\DrawVoxel{4}{2}{1}{blue}
\DrawVoxel{4}{2}{2}{blue}
\DrawVoxel{4}{2}{3}{green}
\DrawVoxel{4}{3}{1}{blue}
\DrawVoxel{4}{3}{2}{blue}
\DrawVoxel{4}{4}{1}{blue}
\DrawVoxel{5}{3}{0}{green}

\end{tikzpicture}

%% file: conclu.tex
In this paper we have studied the inclusion hierarchy of full convex
sets in $\Z^d$ by examining if and how one can peel a full convex set
and keep this property throughout the process. Such sets must be peeled
necessary at vertices. Using the equivalence of full convexity with
$P$-convexity, we have exhibited easy sufficient conditions for
parallel peeling. We have then characterized the peelability
conditions through projections and proved that any full convex set is
peelable at least one point at a time, which may be found lying on its
bounding box.

Our next objective is to prove that a full convex set $X$ is peelable
under the constraint of leaving the full convex subset $Y \subset X$
unchanged. It could lead to a new method for computing the fully
convex envelope of a digital set, by Minkowski dilation, then
peeling. The current envelope operator
\cite{feschet:2022-dgmm,feschet:2023-jmiv} is indeed an iterative finite
growing process but with unclear bound: a peeling method
would be much easier to bound. A related problem is defining a
meaningful intersection for full convex sets, and peelability could
help in this endeavour.

%% file: appendix.tex
\begin{figure}
  \begin{center}
    \begin{tabular}{c|c}
    \begin{minipage}{0.87\textwidth}
      \input{peelable-cfg.tikz}
    \end{minipage}&
    \begin{tikzpicture}[scale=0.46]
      \draw[black,thin] (-1,-1) grid (2,2);
      \draw[fill=black] (0,0) rectangle ++(1,1);
      \node at (1.5,0.5)  {$0$};
      \node at (1.5,1.5)  {$1$};
      \node at (0.5,1.5)  {$2$};
      \node at (-0.5,1.5) {$3$};
      \node at (-0.5,0.5) {$4$};
      \node at (-0.5,-0.5){$5$};
      \node at (0.5,-0.5) {$6$};
      \node at (1.5,-0.5) {$7$};
    \end{tikzpicture}
    \end{tabular}
  \end{center}
  \caption{\label{fig-cfg-2d} The 45 configurations peelable in 2D. Each configuration is
    encoded as an 8-bit integer, each bit $b$ being set to 1 whenever
    the $b$-th neighbor (as displayed on the right) is present in the digital
    set. }
\end{figure}

\begin{small}
\begin{verbatim}
// Check peelability of some point in a set X, given its neighborhood
// input:  cfg  the encoded neighborhood of the point (in {0,...,255})
// output: true iff the point is peelable wrt its neighborhood
// prerequisite: the point should be a vertex of CvxH(X)
bool isLocallyPeelable( unsigned int cfg )
{ // 45 configurations are peelable
  static const unsigned int p[ 8 ]
    = { 0x5151f1ff, 0x11110001, 0x00101000b, 0x01010003,
        0x000000ab, 0x00000000, 0x00001000b, 0x00010003 };
  return p[ cfg >> 5 ] & ( 1 << (cfg & 0x1f) );
}
\end{verbatim}
\end{small}

%% file: peelable-cfg.tikz
\begin{tikzpicture}[scale=0.19]
\draw[black,thin] (-1,-1) grid (2,2);
\draw[fill=black] (0,0) rectangle ++(1,1);
\node at (0.5,-2) {\scriptsize 0};
\end{tikzpicture}
\begin{tikzpicture}[scale=0.19]
\draw[black,thin] (-1,-1) grid (2,2);
\draw[fill=black] (1,0) rectangle ++(1,1);
\draw[fill=black] (0,0) rectangle ++(1,1);
\node at (0.5,-2) {\scriptsize 1};
\end{tikzpicture}
\begin{tikzpicture}[scale=0.19]
\draw[black,thin] (-1,-1) grid (2,2);
\draw[fill=black] (1,1) rectangle ++(1,1);
\draw[fill=black] (0,0) rectangle ++(1,1);
\node at (0.5,-2) {\scriptsize 2};
\end{tikzpicture}
\begin{tikzpicture}[scale=0.19]
\draw[black,thin] (-1,-1) grid (2,2);
\draw[fill=black] (1,0) rectangle ++(1,1);
\draw[fill=black] (1,1) rectangle ++(1,1);
\draw[fill=black] (0,0) rectangle ++(1,1);
\node at (0.5,-2) {\scriptsize 3};
\end{tikzpicture}
\begin{tikzpicture}[scale=0.19]
\draw[black,thin] (-1,-1) grid (2,2);
\draw[fill=black] (0,1) rectangle ++(1,1);
\draw[fill=black] (0,0) rectangle ++(1,1);
\node at (0.5,-2) {\scriptsize 4};
\end{tikzpicture}
\begin{tikzpicture}[scale=0.19]
\draw[black,thin] (-1,-1) grid (2,2);
\draw[fill=black] (1,0) rectangle ++(1,1);
\draw[fill=black] (0,1) rectangle ++(1,1);
\draw[fill=black] (0,0) rectangle ++(1,1);
\node at (0.5,-2) {\scriptsize 5};
\end{tikzpicture}
\begin{tikzpicture}[scale=0.19]
\draw[black,thin] (-1,-1) grid (2,2);
\draw[fill=black] (1,1) rectangle ++(1,1);
\draw[fill=black] (0,1) rectangle ++(1,1);
\draw[fill=black] (0,0) rectangle ++(1,1);
\node at (0.5,-2) {\scriptsize 6};
\end{tikzpicture}
\begin{tikzpicture}[scale=0.19]
\draw[black,thin] (-1,-1) grid (2,2);
\draw[fill=black] (1,0) rectangle ++(1,1);
\draw[fill=black] (1,1) rectangle ++(1,1);
\draw[fill=black] (0,1) rectangle ++(1,1);
\draw[fill=black] (0,0) rectangle ++(1,1);
\node at (0.5,-2) {\scriptsize 7};
\end{tikzpicture}
\begin{tikzpicture}[scale=0.19]
\draw[black,thin] (-1,-1) grid (2,2);
\draw[fill=black] (-1,1) rectangle ++(1,1);
\draw[fill=black] (0,0) rectangle ++(1,1);
\node at (0.5,-2) {\scriptsize 8};
\end{tikzpicture}
\begin{tikzpicture}[scale=0.19]
\draw[black,thin] (-1,-1) grid (2,2);
\draw[fill=black] (0,1) rectangle ++(1,1);
\draw[fill=black] (-1,1) rectangle ++(1,1);
\draw[fill=black] (0,0) rectangle ++(1,1);
\node at (0.5,-2) {\scriptsize 12};
\end{tikzpicture}
\begin{tikzpicture}[scale=0.19]
\draw[black,thin] (-1,-1) grid (2,2);
\draw[fill=black] (1,0) rectangle ++(1,1);
\draw[fill=black] (0,1) rectangle ++(1,1);
\draw[fill=black] (-1,1) rectangle ++(1,1);
\draw[fill=black] (0,0) rectangle ++(1,1);
\node at (0.5,-2) {\scriptsize 13};
\end{tikzpicture}
\begin{tikzpicture}[scale=0.19]
\draw[black,thin] (-1,-1) grid (2,2);
\draw[fill=black] (1,1) rectangle ++(1,1);
\draw[fill=black] (0,1) rectangle ++(1,1);
\draw[fill=black] (-1,1) rectangle ++(1,1);
\draw[fill=black] (0,0) rectangle ++(1,1);
\node at (0.5,-2) {\scriptsize 14};
\end{tikzpicture}
\begin{tikzpicture}[scale=0.19]
\draw[black,thin] (-1,-1) grid (2,2);
\draw[fill=black] (1,0) rectangle ++(1,1);
\draw[fill=black] (1,1) rectangle ++(1,1);
\draw[fill=black] (0,1) rectangle ++(1,1);
\draw[fill=black] (-1,1) rectangle ++(1,1);
\draw[fill=black] (0,0) rectangle ++(1,1);
\node at (0.5,-2) {\scriptsize 15};
\end{tikzpicture}
\begin{tikzpicture}[scale=0.19]
\draw[black,thin] (-1,-1) grid (2,2);
\draw[fill=black] (-1,0) rectangle ++(1,1);
\draw[fill=black] (0,0) rectangle ++(1,1);
\node at (0.5,-2) {\scriptsize 16};
\end{tikzpicture}
\begin{tikzpicture}[scale=0.19]
\draw[black,thin] (-1,-1) grid (2,2);
\draw[fill=black] (0,1) rectangle ++(1,1);
\draw[fill=black] (-1,0) rectangle ++(1,1);
\draw[fill=black] (0,0) rectangle ++(1,1);
\node at (0.5,-2) {\scriptsize 20};
\end{tikzpicture}
\begin{tikzpicture}[scale=0.19]
\draw[black,thin] (-1,-1) grid (2,2);
\draw[fill=black] (1,1) rectangle ++(1,1);
\draw[fill=black] (0,1) rectangle ++(1,1);
\draw[fill=black] (-1,0) rectangle ++(1,1);
\draw[fill=black] (0,0) rectangle ++(1,1);
\node at (0.5,-2) {\scriptsize 22};
\end{tikzpicture}
\begin{tikzpicture}[scale=0.19]
\draw[black,thin] (-1,-1) grid (2,2);
\draw[fill=black] (-1,1) rectangle ++(1,1);
\draw[fill=black] (-1,0) rectangle ++(1,1);
\draw[fill=black] (0,0) rectangle ++(1,1);
\node at (0.5,-2) {\scriptsize 24};
\end{tikzpicture}
\begin{tikzpicture}[scale=0.19]
\draw[black,thin] (-1,-1) grid (2,2);
\draw[fill=black] (0,1) rectangle ++(1,1);
\draw[fill=black] (-1,1) rectangle ++(1,1);
\draw[fill=black] (-1,0) rectangle ++(1,1);
\draw[fill=black] (0,0) rectangle ++(1,1);
\node at (0.5,-2) {\scriptsize 28};
\end{tikzpicture}
\begin{tikzpicture}[scale=0.19]
\draw[black,thin] (-1,-1) grid (2,2);
\draw[fill=black] (1,1) rectangle ++(1,1);
\draw[fill=black] (0,1) rectangle ++(1,1);
\draw[fill=black] (-1,1) rectangle ++(1,1);
\draw[fill=black] (-1,0) rectangle ++(1,1);
\draw[fill=black] (0,0) rectangle ++(1,1);
\node at (0.5,-2) {\scriptsize 30};
\end{tikzpicture}
\begin{tikzpicture}[scale=0.19]
\draw[black,thin] (-1,-1) grid (2,2);
\draw[fill=black] (-1,-1) rectangle ++(1,1);
\draw[fill=black] (0,0) rectangle ++(1,1);
\node at (0.5,-2) {\scriptsize 32};
\end{tikzpicture}
\begin{tikzpicture}[scale=0.19]
\draw[black,thin] (-1,-1) grid (2,2);
\draw[fill=black] (-1,0) rectangle ++(1,1);
\draw[fill=black] (-1,-1) rectangle ++(1,1);
\draw[fill=black] (0,0) rectangle ++(1,1);
\node at (0.5,-2) {\scriptsize 48};
\end{tikzpicture}
\begin{tikzpicture}[scale=0.19]
\draw[black,thin] (-1,-1) grid (2,2);
\draw[fill=black] (0,1) rectangle ++(1,1);
\draw[fill=black] (-1,0) rectangle ++(1,1);
\draw[fill=black] (-1,-1) rectangle ++(1,1);
\draw[fill=black] (0,0) rectangle ++(1,1);
\node at (0.5,-2) {\scriptsize 52};
\end{tikzpicture}
\begin{tikzpicture}[scale=0.19]
\draw[black,thin] (-1,-1) grid (2,2);
\draw[fill=black] (-1,1) rectangle ++(1,1);
\draw[fill=black] (-1,0) rectangle ++(1,1);
\draw[fill=black] (-1,-1) rectangle ++(1,1);
\draw[fill=black] (0,0) rectangle ++(1,1);
\node at (0.5,-2) {\scriptsize 56};
\end{tikzpicture}
\begin{tikzpicture}[scale=0.19]
\draw[black,thin] (-1,-1) grid (2,2);
\draw[fill=black] (0,1) rectangle ++(1,1);
\draw[fill=black] (-1,1) rectangle ++(1,1);
\draw[fill=black] (-1,0) rectangle ++(1,1);
\draw[fill=black] (-1,-1) rectangle ++(1,1);
\draw[fill=black] (0,0) rectangle ++(1,1);
\node at (0.5,-2) {\scriptsize 60};
\end{tikzpicture}
\begin{tikzpicture}[scale=0.19]
\draw[black,thin] (-1,-1) grid (2,2);
\draw[fill=black] (0,-1) rectangle ++(1,1);
\draw[fill=black] (0,0) rectangle ++(1,1);
\node at (0.5,-2) {\scriptsize 64};
\end{tikzpicture}
\begin{tikzpicture}[scale=0.19]
\draw[black,thin] (-1,-1) grid (2,2);
\draw[fill=black] (1,0) rectangle ++(1,1);
\draw[fill=black] (0,-1) rectangle ++(1,1);
\draw[fill=black] (0,0) rectangle ++(1,1);
\node at (0.5,-2) {\scriptsize 65};
\end{tikzpicture}
\begin{tikzpicture}[scale=0.19]
\draw[black,thin] (-1,-1) grid (2,2);
\draw[fill=black] (1,0) rectangle ++(1,1);
\draw[fill=black] (1,1) rectangle ++(1,1);
\draw[fill=black] (0,-1) rectangle ++(1,1);
\draw[fill=black] (0,0) rectangle ++(1,1);
\node at (0.5,-2) {\scriptsize 67};
\end{tikzpicture}
\begin{tikzpicture}[scale=0.19]
\draw[black,thin] (-1,-1) grid (2,2);
\draw[fill=black] (-1,0) rectangle ++(1,1);
\draw[fill=black] (0,-1) rectangle ++(1,1);
\draw[fill=black] (0,0) rectangle ++(1,1);
\node at (0.5,-2) {\scriptsize 80};
\end{tikzpicture}
\begin{tikzpicture}[scale=0.19]
\draw[black,thin] (-1,-1) grid (2,2);
\draw[fill=black] (-1,1) rectangle ++(1,1);
\draw[fill=black] (-1,0) rectangle ++(1,1);
\draw[fill=black] (0,-1) rectangle ++(1,1);
\draw[fill=black] (0,0) rectangle ++(1,1);
\node at (0.5,-2) {\scriptsize 88};
\end{tikzpicture}
\begin{tikzpicture}[scale=0.19]
\draw[black,thin] (-1,-1) grid (2,2);
\draw[fill=black] (-1,-1) rectangle ++(1,1);
\draw[fill=black] (0,-1) rectangle ++(1,1);
\draw[fill=black] (0,0) rectangle ++(1,1);
\node at (0.5,-2) {\scriptsize 96};
\end{tikzpicture}
\begin{tikzpicture}[scale=0.19]
\draw[black,thin] (-1,-1) grid (2,2);
\draw[fill=black] (1,0) rectangle ++(1,1);
\draw[fill=black] (-1,-1) rectangle ++(1,1);
\draw[fill=black] (0,-1) rectangle ++(1,1);
\draw[fill=black] (0,0) rectangle ++(1,1);
\node at (0.5,-2) {\scriptsize 97};
\end{tikzpicture}\hfill
\begin{tikzpicture}[scale=0.19]
\draw[black,thin] (-1,-1) grid (2,2);
\draw[fill=black] (-1,0) rectangle ++(1,1);
\draw[fill=black] (-1,-1) rectangle ++(1,1);
\draw[fill=black] (0,-1) rectangle ++(1,1);
\draw[fill=black] (0,0) rectangle ++(1,1);
\node at (0.5,-2) {\scriptsize 112};
\end{tikzpicture}\hfill
\begin{tikzpicture}[scale=0.19]
\draw[black,thin] (-1,-1) grid (2,2);
\draw[fill=black] (-1,1) rectangle ++(1,1);
\draw[fill=black] (-1,0) rectangle ++(1,1);
\draw[fill=black] (-1,-1) rectangle ++(1,1);
\draw[fill=black] (0,-1) rectangle ++(1,1);
\draw[fill=black] (0,0) rectangle ++(1,1);
\node at (0.5,-2) {\scriptsize 120};
\end{tikzpicture}\hfill
\begin{tikzpicture}[scale=0.19]
\draw[black,thin] (-1,-1) grid (2,2);
\draw[fill=black] (1,-1) rectangle ++(1,1);
\draw[fill=black] (0,0) rectangle ++(1,1);
\node at (0.5,-2) {\scriptsize 128};
\end{tikzpicture}\hfill
\begin{tikzpicture}[scale=0.19]
\draw[black,thin] (-1,-1) grid (2,2);
\draw[fill=black] (1,0) rectangle ++(1,1);
\draw[fill=black] (1,-1) rectangle ++(1,1);
\draw[fill=black] (0,0) rectangle ++(1,1);
\node at (0.5,-2) {\scriptsize 129};
\end{tikzpicture}\hfill
\begin{tikzpicture}[scale=0.19]
\draw[black,thin] (-1,-1) grid (2,2);
\draw[fill=black] (1,0) rectangle ++(1,1);
\draw[fill=black] (1,1) rectangle ++(1,1);
\draw[fill=black] (1,-1) rectangle ++(1,1);
\draw[fill=black] (0,0) rectangle ++(1,1);
\node at (0.5,-2) {\scriptsize 131};
\end{tikzpicture}\hfill
\begin{tikzpicture}[scale=0.19]
\draw[black,thin] (-1,-1) grid (2,2);
\draw[fill=black] (1,0) rectangle ++(1,1);
\draw[fill=black] (0,1) rectangle ++(1,1);
\draw[fill=black] (1,-1) rectangle ++(1,1);
\draw[fill=black] (0,0) rectangle ++(1,1);
\node at (0.5,-2) {\scriptsize 133};
\end{tikzpicture}\hfill
\begin{tikzpicture}[scale=0.19]
\draw[black,thin] (-1,-1) grid (2,2);
\draw[fill=black] (1,0) rectangle ++(1,1);
\draw[fill=black] (1,1) rectangle ++(1,1);
\draw[fill=black] (0,1) rectangle ++(1,1);
\draw[fill=black] (1,-1) rectangle ++(1,1);
\draw[fill=black] (0,0) rectangle ++(1,1);
\node at (0.5,-2) {\scriptsize 135};
\end{tikzpicture}\hfill
\begin{tikzpicture}[scale=0.19]
\draw[black,thin] (-1,-1) grid (2,2);
\draw[fill=black] (0,-1) rectangle ++(1,1);
\draw[fill=black] (1,-1) rectangle ++(1,1);
\draw[fill=black] (0,0) rectangle ++(1,1);
\node at (0.5,-2) {\scriptsize 192};
\end{tikzpicture}\hfill
\begin{tikzpicture}[scale=0.19]
\draw[black,thin] (-1,-1) grid (2,2);
\draw[fill=black] (1,0) rectangle ++(1,1);
\draw[fill=black] (0,-1) rectangle ++(1,1);
\draw[fill=black] (1,-1) rectangle ++(1,1);
\draw[fill=black] (0,0) rectangle ++(1,1);
\node at (0.5,-2) {\scriptsize 193};
\end{tikzpicture}\hfill
\begin{tikzpicture}[scale=0.19]
\draw[black,thin] (-1,-1) grid (2,2);
\draw[fill=black] (1,0) rectangle ++(1,1);
\draw[fill=black] (1,1) rectangle ++(1,1);
\draw[fill=black] (0,-1) rectangle ++(1,1);
\draw[fill=black] (1,-1) rectangle ++(1,1);
\draw[fill=black] (0,0) rectangle ++(1,1);
\node at (0.5,-2) {\scriptsize 195};
\end{tikzpicture}\hfill
\begin{tikzpicture}[scale=0.19]
\draw[black,thin] (-1,-1) grid (2,2);
\draw[fill=black] (-1,0) rectangle ++(1,1);
\draw[fill=black] (0,-1) rectangle ++(1,1);
\draw[fill=black] (1,-1) rectangle ++(1,1);
\draw[fill=black] (0,0) rectangle ++(1,1);
\node at (0.5,-2) {\scriptsize 208};
\end{tikzpicture}\hfill
\begin{tikzpicture}[scale=0.19]
\draw[black,thin] (-1,-1) grid (2,2);
\draw[fill=black] (-1,-1) rectangle ++(1,1);
\draw[fill=black] (0,-1) rectangle ++(1,1);
\draw[fill=black] (1,-1) rectangle ++(1,1);
\draw[fill=black] (0,0) rectangle ++(1,1);
\node at (0.5,-2) {\scriptsize 224};
\end{tikzpicture}\hfill
\begin{tikzpicture}[scale=0.19]
\draw[black,thin] (-1,-1) grid (2,2);
\draw[fill=black] (1,0) rectangle ++(1,1);
\draw[fill=black] (-1,-1) rectangle ++(1,1);
\draw[fill=black] (0,-1) rectangle ++(1,1);
\draw[fill=black] (1,-1) rectangle ++(1,1);
\draw[fill=black] (0,0) rectangle ++(1,1);
\node at (0.5,-2) {\scriptsize 225};
\end{tikzpicture}\hfill
\begin{tikzpicture}[scale=0.18]
\draw[black,thin] (-1,-1) grid (2,2);
\draw[fill=black] (-1,0) rectangle ++(1,1);
\draw[fill=black] (-1,-1) rectangle ++(1,1);
\draw[fill=black] (0,-1) rectangle ++(1,1);
\draw[fill=black] (1,-1) rectangle ++(1,1);
\draw[fill=black] (0,0) rectangle ++(1,1);
\node at (0.5,-2) {\scriptsize 240};
\end{tikzpicture}

%% file: paper.bbl
\begin{thebibliography}{10}
\providecommand{\url}[1]{\texttt{#1}}
\providecommand{\urlprefix}{URL }
\providecommand{\doi}[1]{https://doi.org/#1}

\bibitem{bertrand2014powerful}
Bertrand, G., Couprie, M.: Powerful parallel and symmetric 3d thinning schemes
  based on critical kernels. Journal of Mathematical Imaging and Vision
  \textbf{48},  134--148 (2014)

\bibitem{bertrand1994new}
Bertrand, G., Malandain, G.: A new characterization of three-dimensional simple
  points. Pattern Recognition Letters  \textbf{15}(2),  169--175 (1994)

\bibitem{brlek2009lyndon+}
Brlek, S., Lachaud, J.O., Proven{\c{c}}al, X., Reutenauer, C.: Lyndon+
  christoffel= digitally convex. Pattern Recognition  \textbf{42}(10),
  2239--2246 (2009)

\bibitem{couprie2008new}
Couprie, M., Bertrand, G.: New characterizations of simple points in 2d, 3d,
  and 4d discrete spaces. IEEE Transactions on Pattern Analysis and Machine
  Intelligence  \textbf{31}(4),  637--648 (2008)

\bibitem{couprie2016asymmetric}
Couprie, M., Bertrand, G.: Asymmetric parallel 3d thinning scheme and
  algorithms based on isthmuses. Pattern Recognition Letters  \textbf{76},
  22--31 (2016)

\bibitem{Eckhardt2000}
Eckhardt, U.: Digital lines and digital convexity. In: Digital and Image
  Geometry, Advanced Lectures. LNCS, vol.~2243, p. 209–228 (2001)

\bibitem{feschet:2022-dgmm}
Feschet, F., Lachaud, J.O.: Full convexity for polyhedral models in digital
  spaces. In: Discrete Geometry and Mathematical Morphology. pp. 98--109.
  Springer International Publishing, Cham (2022)

\bibitem{feschet:2023-jmiv}
Feschet, F., Lachaud, J.O.: An envelope operator for full convexity to define
  polyhedral models in digital spaces. J. Math. Imaging Vis.  \textbf{65}(5),
  754--769 (2023). \doi{10.1007/s10851-023-01155-w}

\bibitem{feschet:2024-dgmm}
Feschet, F., Lachaud, J.O.: New characterizations of full convexity. In:
  Brunetti, S., Frosini, A., Rinaldi, S. (eds.) Discrete Geometry and
  Mathematical Morphology. pp. 41--53. Springer Nature Switzerland, Cham (2024)

\bibitem{feschet:2025-jmiv}
Feschet, F., Lachaud, J.O.: New properties for full convex sets and full convex
  hulls. J. Math. Imaging Vis.  \textbf{67}(8) (2025).
  \doi{10.1007/s10851-024-01225-7}

\bibitem{kim:1982-tpami-3d}
Kim, C.E., Rosenfeld, A.: Convex digital solids. IEEE Trans. Pattern Anal.
  Machine Intel.  \textbf{6},  612--618 (1982)

\bibitem{kim:1982-tpami}
Kim, C.E., Rosenfeld, A.: Digital straight lines and convexity of digital
  regions. IEEE Trans. Pattern Anal. Machine Intel.  \textbf{2},  149--153
  (1982)

\bibitem{kim:1982-pr}
Kim, C.E., Sklansky, J.: Digital and cellular convexity. Pattern Recognition
  \textbf{15}(5),  359 -- 367 (1982). \doi{10.1016/0031-3203(82)90038-3}

\bibitem{Kiselman2004}
Kiselman, C.O.: Convex functions on discrete sets. In: Combinatorial Image
  Analysis. 10th International Workshop, IWCIA. pp. 443--457. LNCS 3322,
  Springer-Verlag (2004)

\bibitem{kiselman:2022}
Kiselman, C.O.: Elements of Digital Geometry, Mathematical Morphology, and
  Discrete Optimization. World Scientific (2022). \doi{10.1142/12584}

\bibitem{lachaud:2021-dgmm}
Lachaud, J.O.: An alternative definition for digital convexity. In: Discrete
  Geometry and Mathematical Morphology (DGMM'2021), Uppsala, Sweden, Proc.
  LNCS, vol. 12708, pp. 269--282. Springer (2021).
  \doi{10.1007/978-3-030-76657-3\_19}

\bibitem{lachaud:2022-jmiv}
Lachaud, J.O.: An alternative definition for digital convexity. J. Math.
  Imaging Vis.  \textbf{64}(7),  718--735 (2022).
  \doi{10.1007/s10851-022-01076-0}

\bibitem{Llinares2002}
Llinares, J.V.: Abstract convexity, some relations and applications.
  Optimization  \textbf{51}(6),  797--818 (2002)

\bibitem{Murota2001}
Murota, K., Shioura, A.: Relationship of m/l-convex functions with discrete
  convex functions by {M}iller and {F}avati–{T}ardella. Discrete Applied
  Maths  \textbf{115},  151--176 (2001)

\bibitem{pudney1998distance}
Pudney, C.: Distance-ordered homotopic thinning: a skeletonization algorithm
  for 3d digital images. Computer vision and image understanding
  \textbf{72}(3),  404--413 (1998)

\bibitem{ronse:1989-tpami}
Ronse, C.: A bibliography on digital and computational convexity (1961-1988).
  IEEE Trans. Pattern Anal. Machine Intel.  \textbf{11}(2),  181--190 (1989)

\bibitem{RoyStell2003}
Roy, A.J., Stell, J.G.: Convexity in discrete space. In: Kuhn, W., Worboys, M.,
  Timpf, S. (eds.) COSIT. pp. 253--269. LNCS 2825, Springer-Verlag (2003)

\bibitem{tarsissi2019convexity}
Tarsissi, L., Coeurjolly, D., Kenmochi, Y., Romon, P.: Convexity preserving
  contraction of digital sets. In: Asian Conference on Pattern Recognition. pp.
  611--624. Springer (2019)

\bibitem{tarsissi2022algorithms}
Tarsissi, L., Kenmochi, Y., Djerroumi, H., Coeurjolly, D., Romon, P., Borel,
  J.P.: Algorithms for pixelwise shape deformations preserving digital
  convexity. In: International Conference on Discrete Geometry and Mathematical
  Morphology. pp. 84--97. Springer (2022)

\bibitem{tarsissi2023convexity}
Tarsissi, L., Kenmochi, Y., Romon, P., Coeurjolly, D., Borel, J.P.: Convexity
  preserving deformations of digital sets: Characterization of removable and
  insertable pixels. Discrete Applied Mathematics  \textbf{341},  270--289
  (2023)

\bibitem{webster2001cell}
Webster, J.: Cell complexes and digital convexity. In: Digital and Image
  Geometry: Advanced Lectures. LNCS, vol.~2243, pp. 272--282 (2001).
  \doi{10.1007/3-540-45576-0_16}

\end{thebibliography}
